\tikzstyle{hvertex}=[thick,circle,inner sep=0.cm, minimum size=2.3mm, fill=white, draw=black]
\tikzstyle{hedge}=[very thick]
\tikzstyle{rededge}=[very thick,red]
\tikzstyle{point}=[draw,circle,inner sep=0.cm, minimum size=1mm, fill=black]
\tikzstyle{pointer}=[thick,->,shorten >=2pt,color=hellgrau]
\tikzstyle{facebdry}=[color=auchblau, very thick] 
\tikzstyle{face}=[facebdry,fill=hellblau]
\tikzstyle{nface}=[color=hellblau,fill=hellblau,thick] 
\pgfplotsset{compat=1.13}
\colorlet{auchblau}{blue!60!white}
\colorlet{hellblau}{blue!20!white}
\colorlet{hellrot}{red!40!white}
\colorlet{hellgrau}{black!30!white}
\newtheorem{definition}{Definition}
\newtheorem{proposition}[definition]{Proposition}
\newtheorem{theorem}[definition]{Theorem}
\newtheorem{corollary}[definition]{Corollary}
\newtheorem{lemma}[definition]{Lemma}
\newcommand{\1}{{\rm 1\hspace*{-0.4ex}%
\rule{0.1ex}{1.52ex}\hspace*{0.2ex}}}
\newcommand{\comment}[1]{}
\newcommand{\emtext}[1]{\text{\em #1}}
\newcommand{\mymargin}[1]{%
  \marginpar{%
    \begin{minipage}{\marginparwidth}\small%
      \begin{flushleft}%
        #1%
      \end{flushleft}%
   \end{minipage}%
  }%
}%
\newcommand{\sm}{\setminus}
\newcommand{\prob}[1]{\textsf{#1}}
\newcommand{\alg}[1]{\textsf{#1}}
\newcommand{\inp}{\smallskip\makebox[1.2cm][l]{\em Input}}
\newcommand{\out}{\makebox[1.2cm][l]{\em Output}}
\newlength{\innerboxwidth}
\newenvironment{XXhproblem}[1]{
\smallskip\noindent
\begin{boxedminipage}{\textwidth}\vspace*{3pt}
\prob{#1}\\[6pt]
\hspace*{1.3cm}\begin{minipage}{\innerboxwidth}
\begin{itemize}\labelwidth3cm
}{
\end{itemize}\end{minipage}\vspace*{3pt}
\end{boxedminipage}
}
\newcommand{\instance}{
\item[{\makebox[1.4cm][l]{\em Instance}}]
}
\newcommand{\task}{
\item[{\makebox[1.4cm][l]{\em Task}}]
}
\newenvironment{hproblem}[1]{
\begin{XXhproblem}{#1}
}{
\end{XXhproblem}
}
\newcommand{\opt}{\text{OPT}}
\newcommand{\spa}{\text{SPRS}}
\newcommand{\Tspa}{\text{SPRS}^2}
\newcommand{\bigO}{O}
\title{Fast Algorithms for Delta-Separated Sparsity Projection}
\author{Henning Bruhn \and Oliver Schaudt}
\date{}
\begin{document}
\maketitle

\begin{abstract}
\noindent
We describe a fast approximation algorithm for the $\Delta$-separated sparsity projection problem.
The $\Delta$-separated sparsity model was introduced by Hegde, Duarte and Cevher (2009) to capture
 the firing process of a single Poisson neuron with absolute refractoriness.
The running time of our projection algorithm is linear for an arbitrary (but fixed) precision and it is both a head and a tail approximation.
This solves a problem of Hegde, Indyk and Schmidt (2015).

We also describe how our algorithm fits into the
approximate model iterative hard tresholding framework of Hegde, Indyk and Schmidt (2014)
that allows to recover $\Delta$-separated sparse signals from noisy random linear measurements.
The resulting recovery algorithm is substantially faster than the existing one, at least for large data sets.
\end{abstract}

\section{Introduction}

Compressed sensing is based on the insight that real-life signals are often sparse.
In character recognition, for example, most pixels are white or nearly so.
That premise allows to obtain a signal with far fewer measurements than classic lower bounds suggest.
It necessitates, however, a computational step that recovers the signal from those measurements. 

Fast iterative recovery algorithms, such as \emph{iterative hard tresholding} (IHT)~\cite{BLUMENSATH2009265} and \emph{compressed sampling matching pursuit} (CoSaMP)~\cite{NEEDELL2009301}, have been proposed. A common step in these 
algorithms is the so-called \emph{sparsity projection} or \emph{model approximation} step: to project a signal onto a sparse signal.

More concretely, consider the most basic notion of sparsity: 
a vector is \emph{$k$-sparse} if at most $k$ of its entries are non-zero.
Then the projection problem consists in finding, for given $x\in\mathbb R^n$,
 a $k$-sparse $y\in\mathbb R^n$ such that $||x-y||_2^2$ is minimized.
This problem is almost trivial---keeping the $k$ largest entries of $x$ and overwriting all other entries with $0$ will do.

We study a slightly more elaborate sparsity model: the  \emph{$\Delta$-separated sparsity model}.
A vector $y$ is \emph{$\Delta$-separated} if 
any two non-zero entries $y_i,y_j$ satisfy $|i-j|\ge \Delta$ or $i=j$.
We treat the corresponding  projection problem:
given $x \in \mathbb R^n$, $k$ and $\Delta$, 
find the $k$-sparse $\Delta$-separated vector $y$ minimizing $||x-y||_2^2$.

The $\Delta$-separated sparsity model was introduced by Hegde, Duarte and Cevher~\cite{HDC09} 
to capture neuronal spike trains in the framework of compressed sensing.
They observed that the projection on the $\Delta$-separated sparsity model can be formulated as an integer linear program over a totally unimodular constraint matrix.
Interior point methods, then, allow the projection to be solved in polynomial time.

Interior point methods, however, have superlinear asymptotic running times.
In their survey paper, Hegde, Indyk and Schmidt~\cite{DBLP:journals/eatcs/HegdeIS15} therefore
ask whether there is an essentially linear time algorithm that finds a satisfying approximate solution (Open problem 6 in~\cite{DBLP:journals/eatcs/HegdeIS15}).

Formally, they consider two kinds of approximation guarantees.
Suppose $A$ is an algorithm that computes, for each input vector $x$, a $k$-sparse $\Delta$-separated vector $A(x)$.
Since the overall goal is to minimize $||x-A(x)||_2^2$, we impose that $A(x)_i \in \{0,x_i\}$ for all $x$ and all indices $i$.

We say that $A$ has \emph{tail approximation guarantee} $\alpha$
if 
\[
||x-A(x)||_2^2\le \alpha ||x-x^*||_2^2 \text{ for all }x,
\]
where $x^*$ is an optimal solution.
Moreover, we say that $A$ has \emph{head approximation guarantee} $\beta$
if 
\[
||A(x)||_2^2\ge \beta ||x^*||_2^2 \text{ for all }x.
\]
It is easy to see that these two notions are incomparable: one does not imply the other.
As usual, we say that an algorithm is an \emph{$\alpha$-head approximation} (\emph{$\beta$-tail approximation}) if it has head (tail) approximation guarantee $\alpha$ (or $\beta$).

We prove the following.

\begin{theorem}\label{thm:main}
There is an algorithm running in $O(\epsilon^{-2}n)$ time that is both a $(1-\epsilon)$-head approximation and a $(1+\epsilon)$-tail approximation for the $\Delta$-separated sparsity projection problem.
\end{theorem}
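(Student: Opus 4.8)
The plan is to first strip away the analytic packaging and reduce the theorem to a single combinatorial optimization. Since we are forced to set $A(x)_i\in\{0,x_i\}$, a solution is determined by the \emph{support} $S$ that it keeps, and with $w_i:=x_i^2\ge 0$ we have $||A(x)||_2^2=\sum_{i\in S}w_i=:w(S)$ and $||x-A(x)||_2^2=||x||_2^2-w(S)$. Feasibility of $S$ means $|S|\le k$ and $S$ is $\Delta$-separated. Writing $S^*$ for the support of an optimal solution $x^*$, $P:=w(S^*)=||x^*||_2^2$ for the optimal head value, and $T:=||x||_2^2-P=||x-x^*||_2^2$ for the optimal tail value, a direct computation shows that a feasible support $S$ gives simultaneously a $(1-\epsilon)$-head and a $(1+\epsilon)$-tail approximation if and only if
\[
P-w(S)\le \epsilon\min(P,T).
\]
So the whole theorem reduces to: compute, in $O(\epsilon^{-2}n)$ time, a feasible $\Delta$-separated support of size at most $k$ whose weight falls short of the optimum by at most $\epsilon\min(P,T)$.

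The exact maximum-weight problem is easy in principle: the two-dimensional recursion that records, for each prefix and each allowed cardinality $j\le k$, the best attainable weight solves it in $O(nk)$ time. This is too slow because $k$ may be as large as $\lceil n/\Delta\rceil$. The structural fact I would exploit to do better is the one the authors already point to: the separation constraints together with the single cardinality constraint $\sum_i y_i\le k$ form a \emph{consecutive-ones} (interval) matrix and are therefore totally unimodular. Hence the linear relaxation is integral for every integer budget, and the optimum value $\phi(k)$ is a \emph{concave}, piecewise-linear function of $k$ with integer breakpoints. Concavity is exactly what makes a Lagrangian treatment of the cardinality constraint essentially lossless.

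Given this, the algorithm I would build dualizes only the cardinality constraint. For a penalty $\lambda\ge0$ the inner problem $\max\{\sum_{i\in S}(w_i-\lambda): S\ \Delta\text{-separated}\}$ keeps its interval structure and is solved by the one-dimensional recursion $g_\lambda(i)=\max\!\bigl(g_\lambda(i-1),\,(w_i-\lambda)+g_\lambda(i-\Delta)\bigr)$ in $O(n)$ time, returning a set $S_\lambda$ whose size $c(\lambda)$ is monotone non-increasing in $\lambda$. Weak duality gives $w(S_\lambda)\le P\le V(\lambda)+\lambda k$, where $V(\lambda)$ is the penalized optimum; and concavity guarantees that for $\lambda$ strictly between two consecutive slopes of $\phi$ the maximizer has size exactly $k$, so that in the ideal case the gap closes completely. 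The algorithm searches over $\lambda$ --- breaking ties towards smaller cardinality so that the returned $S_\lambda$ stays feasible --- and stops once the certified gap $\bigl(V(\lambda)+\lambda k\bigr)-w(S_\lambda)$ drops below $\epsilon\min(P,T)$; by the displayed equivalence this immediately yields both approximation guarantees. Evaluating the $O(n)$ inner DP on a grid of penalties whose size I would argue can be taken to depend only on $\epsilon$ accounts for the target running time $O(\epsilon^{-2}n)$.

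The main obstacle, I expect, is the two-sided requirement in the \emph{tail-dominant} regime. When $T\ll P$ --- the signal is already almost $\Delta$-separated and $k$-sparse --- the admissible slack $\epsilon\min(P,T)=\epsilon T$ is minuscule, so a generic head approximation that merely loses an $\epsilon$-fraction of $P$ is hopeless; the algorithm must reproduce almost all of the optimal support. Controlling the parametric search to certify a gap this small, while ensuring that the number of inner DP evaluations stays bounded by a function of $\epsilon$ alone (rather than incurring a $\log$ of the weight range, which would spoil strict linearity), is the delicate point. Concavity of $\phi$ is the lever here: it lets the dual bound $V(\lambda)+\lambda k$ track $P$ tightly, so that a coarse, data-independent set of penalties should already bracket the optimal slope closely enough. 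Making this bracketing argument quantitative --- and handling the boundary cases where the budget $k$ is non-binding (when the unconstrained max-weight separated set already has at most $k$ elements) or where ties in $\phi$ force a careful choice of returned set --- is where the real work of the proof lies.
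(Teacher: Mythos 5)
Your reduction to ``find a feasible support $S$ with $P-w(S)\le\epsilon\min(P,T)$'' is correct and matches what the paper achieves (it runs a head algorithm and a tail algorithm separately and keeps the heavier output, which is equivalent). The concavity of the budget-constrained optimum in $k$ is also a true and relevant structural fact --- the paper proves exactly this statement for its blocks via an interleaving/exchange argument. But the algorithmic core of your proposal has a genuine gap precisely at the point you flag as ``where the real work lies'': the claim that a \emph{data-independent} grid of penalties of size $f(\epsilon)$ suffices to certify a duality gap of $\epsilon\min(P,T)$ is not substantiated, and I do not believe it can be. The critical multiplier $\lambda^*$ is a slope of $\phi$ and hence depends on the weights $w_i$; to certify a gap of $\epsilon T$ in the tail-dominant regime (where $T$ may be arbitrarily small relative to $P$ and to the weight range) you must locate $\lambda^*$ to an accuracy that depends on the data, not just on $\epsilon$. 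A binary or Megiddo-style parametric search over the $O(n^2)$ candidate slopes costs at least an extra $\log$ factor, violating the claimed $O(\epsilon^{-2}n)$ bound, and a fixed grid simply cannot bracket $\lambda^*$ tightly enough. A second, unaddressed degeneracy compounds this: even at $\lambda=\lambda^*$ exactly, the penalized problem typically has optima of every cardinality in an interval $[k_1,k_2]\ni k$, and your tie-breaking toward smaller cardinality returns one of size $k_1$ with value $\phi(k_1)=\phi(k)-\lambda^*(k-k_1)$, a shortfall not bounded by $\epsilon\min(P,T)$.

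The paper takes an entirely different route that sidesteps the multiplier search. For the head guarantee it deletes, in $\lambda+1\approx\epsilon^{-1}$ shifted ways, a periodic strip of $\Delta$ consecutive indices; an averaging argument shows one deletion loses at most a $\tfrac{1}{\lambda+1}$ fraction of the optimal weight, and the surviving instance splits into independent blocks of length at most $\lambda\Delta$, each solvable exactly by the $O(nk)$ dynamic program at cost $O(\lambda)$ per element. The per-block solutions are merged by selecting the $k$ largest marginal gains $q^t_\ell=\opt(B_t,\ell)-\opt(B_t,\ell-1)$, and it is here --- not in a Lagrangian --- that concavity is used, to show the greedy merge is exact. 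The tail guarantee requires the additional idea of \emph{strong} indices (those with $x_i$ exceeding the total weight of their $\Delta$-neighbourhood), which provably belong to near-optimal solutions and are protected from the strip deletion; combined with a small algebraic lemma this converts the multiplicative head loss into a multiplicative tail loss. If you want to rescue the Lagrangian approach you would need a genuinely new argument for the multiplier localization; as written, the proposal does not prove the theorem.
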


We show, furthermore, how our algorithm fits in the  \emph{approximate model iterative hard tresholding} (AM-IHT) framework by Hegde, Indyk and Schmidt~\cite{DBLP:conf/soda/HegdeIS14}. The resulting recovery algorithm for 
 $k$-sparse $\Delta$-separated signals performs significantly faster than the existing algorithm~\cite{HDC09}, 
at least for large input data.

Finally, we make the case that our algorithms are not only fast from a theoretical point of view
but also in practice. To this end we have tested our algorithms  on two types of 
randomly generated data. We present the results in Section~\ref{superscientificsec}.

\medskip
While still a young field, compressed sensing already offers a rich literature. 
We refer Foucard and Rauhut~\cite{DBLP:books/daglib/0036092} for an extensive introduction to the field.
The sparsity model we study here, $\Delta$-separated sparsity, is an example of a whole range of models, 
that fall under the heading of \emph{structured sparsity}. Examples of this come up naturally in 
applications such as image processing, where wavelet 
coefficients of piecewise smooth functions approximately form a subtree of a given rooted tree; 
see~\cite{DBLP:books/daglib/0098272,DBLP:journals/tit/BaraniukCDH10}.

We finally remark that the $\Delta$-separated sparsity projection problem may also be expressed in the language of graphs: indeed, 
an optimal solution corresponds to a maximum weight independent set of size at most $k$ in 
a certain unit interval graph. The maximum weight independent set problem,
without any restriction on the size of the set, admits a linear time algorithm on interval graphs~\cite{Fra76}.
Generalized cardinality restrictions on independent sets have been 
studied too~\cite{DBLP:journals/corr/Bandyapadhyay14,DBLP:conf/caldam/KalraMPP17}.

\section{A combinatorial optimization formulation}

Since $||x||_2^2 = \sum_{i=1}^n x_i^2$, we may study the non-negative vector obtained from the input vector $x$ by taking the square of each entry instead.
This yields the following, more convenient formulation of the $\Delta$-separated sparsity model projection problem.

\begin{hproblem}{Separated Sparsity}
\instance Positive integers $k$, $\Delta$ and a vector $x\in\mathbb R_{\ge 0}^n$ of non-negative reals.
\task Find $I\subseteq [n]$ such that $|I|\leq k$, such that $|i-j|\geq\Delta$
for all distinct $i,j\in I$ and such that $\sum_{i\in I}x_i$ is maximal.
\end{hproblem} 

Obviously, head and tail approximation could as naturally be formulated with respect to other norms. 
At least for $\ell_p$-norms the methods we propose would still work.

We prove the following two statements.

\begin{lemma}\label{thm:head}
There is a $(1-\epsilon)$-head approximation $A$ for \prob{Separated Sparsity} that runs in $O(\epsilon^{-2}n)$ time.
That is, $\sum_{i \in A(x)} x_i \ge (1-\epsilon) \sum_{j \in I^*} x_j$ where $I^*$ is an optimal solution.
\end{lemma}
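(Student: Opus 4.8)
The plan is to reduce the task to many independent small subproblems that can each be solved exactly, and to show that the interaction between subproblems costs only an $\epsilon$-fraction of the optimum. Write $W=\sum_{j\in I^*}x_j$ and set $b=\lceil 2/\epsilon\rceil$. First I would partition $[n]$ into consecutive blocks of length $b\Delta$. Forbidding the selection of the last $\Delta-1$ positions of every block \emph{decouples} the blocks: a chosen index in one block and a chosen index in the next are then automatically at distance at least $\Delta$, so a union of $\Delta$-separated sets, one taken from each block, is again $\Delta$-separated. The only price for this decoupling is the weight of $I^*$ that happens to fall into the forbidden buffer zones.

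To control that price I would shift the whole partition by $r\Delta$ for $r=0,1,\dots,b-1$, giving $b$ different blockings. Each buffer zone has width $\Delta-1<\Delta$, while consecutive shifts move all buffers by exactly $\Delta$; hence every index lies in a buffer zone for at most one of the $b$ offsets. Summing over offsets, $\sum_{r}\bigl(\text{weight of }I^*\text{ inside the buffers of offset }r\bigr)\le W$, so for at least one offset the buffers meet $I^*$ in weight at most $W/b\le\tfrac{\epsilon}{2}W$. For that offset the restriction of $I^*$ to the non-buffer positions is a feasible decoupled solution of weight at least $(1-\tfrac{\epsilon}{2})W$; we do not need to know which offset is good, only to try all of them and keep the best output.

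For a fixed offset, a block of length $b\Delta$ can contain at most $b$ pairwise $\Delta$-separated indices, so for each block $j$ I would compute, by the standard recurrence $v_j(i,t)=\max\{v_j(i-1,t),\,x_i+v_j(i-\Delta,t-1)\}$, the value $v_j(t)$ equal to the maximum weight of at most $t$ admissible $\Delta$-separated indices inside block $j$, for $t=0,\dots,b$. This costs $O(b^2\Delta)$ per block and $O(nb)$ over all blocks of one offset. It then remains to pick a budget $t_j\le b$ per block with $\sum_j t_j\le k$ maximising $\sum_j v_j(t_j)$. Each $v_j$ is concave in $t$ — the sliding-window constraints form an interval, hence totally unimodular, matrix, and the optimal value of the associated linear program is concave in its right-hand side — so this separable concave allocation is solved exactly by the greedy rule of taking the $k$ largest marginal gains $v_j(t)-v_j(t-1)$. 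There are only $n/\Delta$ such gains, so a linear-time selection finishes one offset in $O(n/\Delta)$. Running over all $b$ offsets and returning the best solution costs $O(nb^2+bn/\Delta)=O(\epsilon^{-2}n)$ in total, and by the previous paragraph the best offset already achieves weight at least $(1-\tfrac{\epsilon}{2})W\ge(1-\epsilon)W$ with a genuinely feasible set, which is the claimed head guarantee.

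The step I expect to be the main obstacle is the decoupling: separating the blocks while provably discarding no more than an $\epsilon$-fraction of $W$. The buffer-and-offset averaging is exactly what keeps this loss small using only $O(1/\epsilon)$ offsets, which is what allows the total running time to stay at $O(\epsilon^{-2}n)$; a coarser decoupling would either sacrifice too much weight or demand too many offsets. A secondary point requiring care is keeping the combination step linear, for which the concavity of the per-block value functions $v_j$ is essential, since it replaces an $O(nk)$ knapsack-style merge by a single greedy selection.
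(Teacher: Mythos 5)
Your proposal is correct and follows essentially the same route as the paper: the same family of $O(1/\epsilon)$ shifted blockings with $\Delta$-wide buffer zones, the same averaging argument showing some offset loses at most an $\epsilon$-fraction of $x(I^*)$, the same exact dynamic program per block, and the same greedy merge over the marginal gains $v_j(t)-v_j(t-1)$, with the same $O(\epsilon^{-2}n)$ accounting. The only point argued differently is the concavity of $t\mapsto v_j(t)$, which you derive from total unimodularity and concavity of the LP value in its right-hand side, while the paper gives a direct combinatorial interleaving argument; both are valid.
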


\begin{lemma}\label{thm:tail}
There is a $(1+\epsilon)$-tail approximation $A$ for \prob{Separated Sparsity} that runs in $\bigO(\epsilon^{-2}n)$ time.
That is, $\sum_{i \notin A(x)} x_i \le (1+\epsilon) \sum_{j \notin I^*} x_j$ where $I^*$ is an optimal solution.
\end{lemma}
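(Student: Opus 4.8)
The plan is to reformulate the tail guarantee as an additive lower bound on the captured weight and then attack it with a coarsened dynamic program. Write $S=\sum_{i=1}^n x_i$, let $I^*$ be an optimal solution with value $v=\sum_{i\in I^*}x_i$, and set $R=S-v$ (the optimal residual). Since $\sum_{i\notin A(x)}x_i = S-\sum_{i\in A(x)}x_i$ and $\sum_{j\notin I^*}x_j=R$, the required inequality $\sum_{i\notin A(x)}x_i\le(1+\epsilon)R$ is equivalent to $\sum_{i\in A(x)}x_i\ge v-\epsilon R$. Thus it suffices to produce a feasible set that loses at most an additive $\epsilon R$ in weight against the optimum. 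First I would dispose of the degenerate regime: a $\Delta$-separated set has at most $\lceil n/\Delta\rceil$ elements, so if $k\ge\lceil n/\Delta\rceil$ the cardinality bound is vacuous and the exact optimum is the maximum-weight $\Delta$-separated set, computable in $O(n)$ time by the recurrence $f(i)=\max\{f(i-1),\,x_i+f(i-\Delta)\}$. Hence I may assume $k<\lceil n/\Delta\rceil$.

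For the main regime I would exploit the block structure. Cutting $[n]$ into consecutive blocks of length $\Delta$ forces any feasible set to take at most one index per block, so the exact problem is an $O(nk)$ dynamic program over blocks with a cardinality coordinate. To remove the dependence on $k$ I would (i) establish, by an exchange argument on the symmetric difference of two feasible sets of sizes $t-1$ and $t+1$, that the value function $g(t)=\max\{\sum_{i\in J}x_i: J\text{ feasible},\ |J|\le t\}$ is concave in $t$, and (ii) coarsen the weights into $O(1/\epsilon)$ geometric classes after discarding provably negligible mass. Concavity makes $g$ piecewise linear with non-increasing slopes, so a grid of $O(1/\epsilon)$ cardinalities already pins down $g(k)$ up to the allowed additive slack, while the coarsened weights make each block transition cost $O(1/\epsilon)$; together these give the target running time $O(\epsilon^{-2}n)$. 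Rounding each retained weight down loses only a bounded additive amount over the whole instance, and the rounding granularity would be chosen so that this loss is at most $\epsilon R$.

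The main obstacle is precisely this last charging step: the tail guarantee is measured against the residual $R$, which can be arbitrarily small compared with $S$ or with $v$, so a uniform coarsening error of order $\epsilon S$ would be far too crude, and when $R$ is small no aggressive rounding is permissible at all. I expect to resolve this by a dichotomy on the size of $R$: when $R$ is large relative to the heaviest weights, the geometric coarsening error is genuinely $O(\epsilon R)$; when $R$ is small, I would argue that the unconstrained maximum-weight separated set is nearly feasible, and that restricting it to its $k$ heaviest chosen indices already yields residual at most $(1+\epsilon)R$. Making this quantitative — showing that the weight discarded either by coarsening or by truncating the unconstrained optimum is always dominated by the optimal residual — is the technical heart of the argument; the concavity lemma and the running-time bookkeeping are comparatively routine.
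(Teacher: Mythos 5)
There is a genuine gap, and you have located it yourself: everything hinges on charging the approximation loss to $\epsilon R$ where $R=x(\overline{I^*})$, and neither branch of your proposed dichotomy actually delivers this. In the ``$R$ small'' branch, truncating the unconstrained maximum-weight separated set to its $k$ heaviest elements does \emph{not} give residual $(1+\epsilon)R$ in general: for $\Delta=2$, $x=(\delta,\,1.5\delta,\,\delta,\,0,\,10)$ and $k=2$, the unconstrained optimum is $\{1,3,5\}$, whose best $2$-element truncation has residual $1.5\delta$, while $I^*=\{2,5\}$ has residual $R=\delta$; the constrained optimum can differ structurally from the unconstrained one, not merely by dropping light elements, and here $R$ is arbitrarily small relative to everything else, i.e.\ exactly the regime that branch was meant to handle. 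In the ``$R$ large'' branch, $O(\epsilon^{-1})$ geometric weight classes only cover a bounded dynamic range, so you must discard small weights, and the discarded mass is naturally bounded by $\epsilon$ times the total or optimal mass, not by $\epsilon R$; the same objection applies to evaluating the concave value function $g$ on an $O(\epsilon^{-1})$ grid of cardinalities. (Your concavity claim itself is fine --- it is essentially inequality~\eqref{margins} in the paper --- and your reduction of the tail guarantee to an additive $\epsilon R$ head bound is correct; it is only the charging step that is missing, and that step is the whole difficulty.)

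The paper closes exactly this gap with a structural idea your outline lacks. Call an index $i$ \emph{strong} if $x_i$ exceeds the total weight of the other indices within distance $\Delta-1$ of $i$, and \emph{weak} otherwise. Strong indices are pairwise separated, every optimal solution stays $\Delta$-far from strong indices except at the strong indices themselves (Lemma~\ref{redlem}), so the algorithm can commit to them at no cost; and on the set $W$ of weak indices one shows $x(I^*\cap W)\le\tfrac{2}{3}\,x(W)$, i.e.\ the optimum can never capture more than a constant fraction of the weak mass. This last fact is precisely what allows a $(1-\epsilon/2)$-multiplicative \emph{head} guarantee on the weak part (obtained by rerunning Algorithm~\ref{headalgo} on a reduced vector, with the strong indices adjoined to every $S_\nu$) to be converted into a $(1+\epsilon)$-multiplicative \emph{tail} guarantee via Lemma~\ref{lem:am}: wherever the algorithm only approximates, the optimal residual is automatically a constant fraction of the local mass, so relative loss in the head translates into comparable relative loss in the tail. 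Some mechanism of this kind is what your argument needs and does not supply.
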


By running both of these algorithms and taking the solution with larger value $\sum_{i\in A(x)}x_i$ we obtain an algorithm that is both, a $(1-\epsilon)$-head and a $(1+\epsilon)$-tail approximation. This implies Theorem~\ref{thm:main}, our main result.\footnote{With a little bit of work, one may also prove that the tail approximation algorithm (Algorithm~\ref{tailalgo})
actually achieves the same head approximation guarantee as our head approximation algorithm. This is because we use the head approximation as a subroutine in our tail approximation.}

\comment{
\section{Related work}

\subsection{Compressed sensing}

A standard result in compressed sensing says that a $k$-sparse signal vector $x$ of length $n$ can be recovered approximately using $O(k \log(\frac{n}{k}))$ random linear projections in polynomial time (cf.~Foucard and Rauhut~\cite{DBLP:books/daglib/0036092} for an extensive introduction into the field).
Typically, the signals obtained in the various applications are not only sparse, but exhibit further structure.
For example, the wavelet coefficients of piecewise smooth functions approximately form a subtree of a given rooted tree, in addition to being sparse~(cf.~\cite{DBLP:books/daglib/0098272}).
It has been shown that the exploitation of this structured sparsity leads to faster recovery algorithms needing drastically fewer measurements, namely $O(k)$ in the case of the above-mentioned \emph{tree sparsity}~\cite{DBLP:journals/tit/BaraniukCDH10}.

In this spirit, Hegde, Duarte and Cevher~\cite{HDC09} introduced the $\Delta$-separated sparsity model to study neuronal spike trains from the viewpoint of compressed sensing.
\mymargin{Spike train details needed!}
They show that $O(k \log(\frac{n}{k}-\Delta))$ measurements suffice to reconstruct a $\Delta$-separated $k$-sparse sparse signal.
As a subroutine of their algorithm they compute the projection onto the sparsity model using interior point methods.

To make the algorithm practical for large $n$, Hegde, Indyk and Schmidt~\cite{DBLP:journals/eatcs/HegdeIS15} posed the problem of designing an approximation algorithm that runs in linear or near-linear time--this had been done for the tree-sparsity model before by Backurs, Indyk and Schmidt~\cite{DBLP:conf/soda/BackursIS17}.
It is known that the optimal projection on the structured sparsity model can be relaxed to an approximation if both head and tail approximation algorithms of sufficient precision exist~\cite{DBLP:conf/soda/HegdeIS14}.

\subsection{Graph algorithms}

We remark that we can reformulate \prob{Separated Sparsity} as a problem on independent sets in graphs.
Taking $[n]$ as the vertex set and drawing an edge between two distinct vertices $i,j$ if and only if $|i-j| \le \Delta$, we obtain a vertex-weighted graph $G$.
This graph is a so-called \emph{unit interval graph} (cf.~Brandst\"adt, Le and Spinrad~\cite{Brandstadt:1999:GCS:302970} for more information on this graph class).
An optimal solution of the \prob{Separated Sparsity} problem corresponds to a maximum weight independent set in $G$ of size at most $k$.
Note that if $k \ge \frac{n}{\Delta}$, the restriction $|S|\le k$ becomes trivial and we are searching for a maximum weighted independent set, a problem that can be solved in linear time on unit interval graphs~\cite{Fra76}.

Generalized cardinality restrictions on independent sets have been studied before~\cite{DBLP:journals/corr/Bandyapadhyay14,DBLP:conf/caldam/KalraMPP17}, albeit not in the context of chordal graphs and their subclasses like unit interval graphs.
}

\section{Preliminaries}

We write $[n]$ for $\{1,\ldots, n\}$.
Let $k$, $\Delta$ be positive integers, let $x \in \mathbb R^n_{\ge 0}$, and
consider an  index subset $I \subseteq [n]$.
We write $x(I)$ for $x(I) = \sum_{i \in I} x_i$, and 
 $\overline I$ for $[n] \setminus I$.
We also write $x_I$ for the vector $z\in\mathbb R^n$ defined by
\[
z_i = \begin{cases}
        x_i & \text{if } i\in I\\
        0 & \text{otherwise }
\end{cases}
\]

Because it will be necessary in the recovery algorithm, we generalize the problem 
setting a bit. For an integer $p$, we say that a set $I\subseteq [n]$ is 
\emph{$p$-spikes $\Delta$-separated} if for every set $Z\subseteq [n]$ of $\Delta-1$
consecutive integers it holds that $|I\cap Z|\leq p$. Any vector $x\in\mathbb R^n$
is \emph{$p$-spikes $\Delta$-separated} if its support (the set of indices with non-zero entries) is $p$-spikes $\Delta$-separated.
Note that $x$ is $1$-spike $\Delta$-separated if and only if $x$ is $\Delta$-separated.

We also define the corresponding projection problem:

\begin{hproblem}{$p$-Spikes Separated Sparsity}
\instance Positive integers $k$, $\Delta$ and a vector $x\in\mathbb R_{\ge 0}^n$ of non-negative reals.
\task Find a $p$-spikes $\Delta$-separated set $I\subseteq [n]$ such that $|I|\leq k$
and such that $x(I)$ is maximal.
\end{hproblem} 

Indeed, we will often even consider a version of \prob{$p$-Spikes Separated Sparsity} that
is restricted to subsets of $[n]$.
For a subset $M$ of $[n]$ and any positive integer $\ell$,
we say that a set $I$ is a \emph{feasible solution} of $\spa^p_x(M,\ell)$ 
if $I$ is a $p$-spikes $\Delta$-separated subset of $M$ of size $|I|\leq\ell$. 
The set $I$ is a solution if $x(I)$ is maximum among
all feasible solutions of  $\spa^p_x(M,\ell)$. Note that whether $I$ is a solution or not also depends on $\Delta$. As $\Delta$ will always be clear from the context
and will not vary within our arguments, we omit it from the notation. We may also drop $x$
if $x$ is clear from the context, and if we simply write $\spa_x(M,\ell)$ rather than 
$\spa^p_x(M,\ell)$ we mean the $1$-spiked version of the problem, i.e., that $p=1$. 

For an optimal solution $I$ of $\spa^p_x(M,\ell)$ we define $\opt^p_x(M,\ell)$
as $x(I)$. Clearly, $\opt^p_x([n],k)$ is the value of the optimal 
solution of \prob{$p$-Spikes Separated Sparsity}.

\medskip
To simplify the presentation we assume that the basic arithmetic operations (addition/subtraction, multiplication/division and comparison) can be performed in a single time step each.
We remark without proof that even if the size of the numbers in the input do play a role, our main algorithms still run in linear time.

\section{Dynamic programming}

Both the normal ($1$-spike) \prob{Separated Sparsity} problem as well as its
$2$-spiked variant can be solved with dynamic programming. We start with one spike.

Let us write 
$\opt(M,\ell)=0$ if $\ell\leq 0$ or if $M=\emptyset$.
We observe that for all $i\in [n]$ and $\ell\in [k]$
\begin{equation}\label{optupdate}
\opt([i],\ell)=\max (x_i+\opt([i-\Delta],\ell-1), \opt([i-1],\ell)).
\end{equation}
A feasible solution of $\spa([i],\ell)$ may be computed at the same time by keeping track 
of whether the maximum in~\eqref{optupdate} is attained by picking $i$ for the feasible solution or not.

We thus obtain the following lemma.

\begin{lemma}\label{dplem}
Given an instance $\spa_x([n],k)$ with $x\in\mathbb R^n_{\geq 0}$, dynamic programming allows to compute 
all values $\opt([n],\ell)$ for  $\ell=1,\ldots,k$, as well as the respective solutions,
in running time $\bigO(kn)$.
\end{lemma}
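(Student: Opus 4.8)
The plan is to turn the recurrence~\eqref{optupdate} into a table-filling procedure and to verify both its correctness and its running time. First I would justify the recurrence itself. Fix $i\in[n]$ and $\ell\in[k]$, and let $I$ be an optimal solution of $\spa([i],\ell)$. Either $i\in I$ or $i\notin I$. If $i\notin I$, then $I$ is a feasible solution of $\spa([i-1],\ell)$, and conversely any feasible solution of $\spa([i-1],\ell)$ is feasible for $\spa([i],\ell)$ because $[i-1]\subseteq[i]$; hence $\opt([i],\ell)=\opt([i-1],\ell)$ in this case. If $i\in I$, then by $\Delta$-separation every other $j\in I$ satisfies $j\le i-\Delta$, so $I\setminus\{i\}$ is a feasible solution of $\spa([i-\Delta],\ell-1)$ of value $x(I)-x_i$; conversely, adjoining $i$ to any feasible solution of $\spa([i-\Delta],\ell-1)$ yields a feasible solution of $\spa([i],\ell)$, since all its elements differ from $i$ by at least $\Delta$. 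Taking the better of the two cases gives exactly~\eqref{optupdate}.

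Next I would describe the dynamic program. I maintain a table $T[i][\ell]=\opt([i],\ell)$ indexed by $i\in\{0,1,\ldots,n\}$ and $\ell\in\{0,1,\ldots,k\}$, together with a boolean table recording, for each pair $(i,\ell)$, whether the maximum in~\eqref{optupdate} is attained by the first term (``pick $i$'') or the second. The base cases $T[0][\ell]=0$ and $T[i][0]=0$ come from the convention $\opt(M,\ell)=0$ whenever $M=\emptyset$ or $\ell\le0$; whenever $i-\Delta\le0$ the term $\opt([i-\Delta],\ell-1)$ is likewise $0$ by the same convention. I fill the table in increasing order of $i$, and for each fixed $i$ over all $\ell$. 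Since~\eqref{optupdate} refers only to entries with a strictly smaller first coordinate, namely $i-1$ and $i-\Delta$ (using $\Delta\ge1$), the required values are always available when an entry is computed. This establishes that the values $\opt([n],\ell)$ for $\ell=1,\ldots,k$ are all computed correctly.

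For the running time, there are $(n+1)(k+1)=O(nk)$ table entries, and filling each entry costs $O(1)$ under the unit-cost arithmetic assumption: one lookup, one addition, one comparison. Recovering an actual solution for a given $\ell$ is done by backtracking from $(n,\ell)$ using the stored decision bits: at $(i,\ell)$ one outputs $i$ and moves to $(i-\Delta,\ell-1)$ if the entry was a ``pick'', and moves to $(i-1,\ell)$ otherwise. Each backtracking step strictly decreases $i$, so recovering one solution takes $O(n)$ time and recovering all $k$ of them takes $O(nk)$. Altogether the procedure runs in $O(nk)$ time, as claimed.

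Since this is a textbook-style dynamic program, I do not anticipate a genuine obstacle; the only points requiring care are the correctness of the case split in~\eqref{optupdate}, in particular that choosing $i$ forces all remaining indices into $[i-\Delta]$, and the verification that the fill order respects the dependency structure of the recurrence.
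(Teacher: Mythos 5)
Your proposal is correct and follows the same route as the paper, which simply states the recurrence~\eqref{optupdate} and notes that it yields an $\bigO(kn)$ dynamic program; you supply the routine details (case split justifying the recurrence, fill order, backtracking) that the paper leaves implicit. No issues.
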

We note that the algorithm is essentially a special case of the algorithm 
of Bandyapadhyay~\cite{DBLP:journals/corr/Bandyapadhyay14} for the budgeted maximum weight independet
set problem on interval graphs.

As $k$ cannot be considered a constant, but rather might even be linear in~$n$,
the dynamic programming algorithm does not run in linear time. 
Indeed, this is the whole point of this article: to improve on the running time of $\bigO(kn)$.
If~$n$, however, is so large 
that it does not matter, i.e., if $k\geq \tfrac{n}{\Delta}$ then a simpler dynamic
programming approach, that drops the parameter~$\ell$, can be used to compute $\opt([n],k)$ in $\bigO(n)$-time.

We also describe a slightly trickier dynamic programming algorithm to solve
\prob{$2$-Spikes Separated Sparsity}.

\begin{lemma}\label{2dplem}
Given an instance $\Tspa_x([n],k)$ with $x\in\mathbb R^n_{\geq 0}$, dynamic programming allows to compute 
all values $\opt^2_x([n],\ell)$ for $\ell=1,\ldots,k$, as well as the respective solutions, in running time $\bigO(k\Delta n)$.
\end{lemma}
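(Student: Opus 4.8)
The plan is to mirror the one-spike dynamic program of Lemma~\ref{dplem} but to enrich the state so that the extra freedom of the second spike can be controlled. First I would record the combinatorial meaning of $2$-spikes $\Delta$-separation: writing a feasible set as $i_1<i_2<\cdots<i_m$, the requirement of at most two chosen indices per window is equivalent to demanding that every \emph{triple} of consecutive chosen indices be spread out, i.e. $i_{t+2}-i_t\ge\Delta$ for all $t$, while consecutive indices $i_{t+1}-i_t$ may be arbitrarily small. Indeed a window captures three chosen indices exactly when some consecutive triple fits inside it, so the triple condition is both necessary and sufficient. This is the key structural simplification: the only binding constraints are between an index and the one two places before it in the chosen sequence.

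Consequently, when we scan $i=1,\ldots,n$ and decide whether to take $i$ as the new largest chosen index, feasibility depends only on the position of the second-to-last chosen index: if that index is $j$, we need $i-j\ge\Delta$, whereas the immediately preceding chosen index may be as close to $i$ as we like. I would therefore use the state $(i,\ell,j)$, where $i$ is the last chosen index, $\ell\le k$ the number of chosen indices, and $j$ the second-to-last chosen index, with the crucial economy that $j$ is recorded exactly only while it lies in $\{i-\Delta+1,\ldots,i-1\}$; once $j$ sits at distance $\ge\Delta$ we collapse all such states into a single marker ``far'' (and a marker ``none'' for $\ell\le 1$). Since a feasible set has at most one chosen index in $\{i-\Delta+1,\ldots,i-1\}$, there are only $O(\Delta)$ relevant values of $j$, so the state space has size $O(k\Delta n)$, and the value stored is $\opt^2_x$ restricted to $[i]$ subject to the boundary configuration $(\ell,j)$.

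The transition adds the next chosen index $i'$, which turns the current last index $i$ into the new second-to-last. If $i'-i\ge\Delta$ (a \emph{far} jump) there is no constraint and the new second-to-last marker becomes ``far''; if $i'-i\le\Delta-1$ (a \emph{close} jump) the triple constraint is $i'-j\ge\Delta$, which we can test directly because $j$ is exactly the quantity the state records. Close jumps move $i'$ only within a window of width $\Delta$, contributing $O(\Delta)$ work per pair $(i,\ell)$; to hold this bound I would precompute, for each $(i,\ell)$, the prefix maxima of the stored values over the $O(\Delta)$ admissible $j$, so that each target value is read off in constant time. The far jumps are the only place where a naive implementation would cost $O(n)$ per state, so the essential trick is to aggregate them: maintaining, as $i'$ increases and for each $\ell$, a running maximum over all states whose last index lies at distance $\ge\Delta$ lets every ``far'' target be computed in amortized constant time. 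Altogether the work is $O(\Delta)$ per pair $(i,\ell)$, i.e. $O(k\Delta n)$ in total, and reading off $\opt^2_x([n],\ell)$ for every $\ell$—together with backtracking pointers for the solutions, exactly as in Lemma~\ref{dplem}—stays within this bound.

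I expect the main obstacle to be precisely this bookkeeping: verifying that the second-to-last index is genuinely the only additional information needed, so that third-to-last and earlier triple constraints are already enforced at the step where those indices were inserted, and then organizing the prefix maxima and the running ``far'' maximum so that the $O(\Delta)$ window-state does not silently inflate the transition cost to $O(\Delta^2)$. Once the state is shown to capture exactly the feasibility of the next insertion, correctness follows by the same induction on $i$ as for~\eqref{optupdate}, and the running time is a matter of the amortization just described.
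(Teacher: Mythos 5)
Your proof is correct and lands on essentially the same algorithm as the paper: a left-to-right dynamic program over prefixes with an extra state dimension of size $O(\Delta)$ recording just enough recent history to enforce the ``at most two per window'' constraint, for $O(k\Delta n)$ states in total. The difference is in how that history is encoded. You record the identity of the second-to-last chosen index (collapsed to ``far'' once it is $\ge\Delta$ away), derive feasibility from the consecutive-triple characterization $i_{t+2}-i_t\ge\Delta$, and then need two amortization devices --- prefix maxima over the admissible second-to-last positions and a running maximum over ``far'' sources --- to keep the transition cost at $O(1)$ per state. The paper instead defines $\opt^2([r],i,\ell)$ as the optimum over solutions $I\subseteq[r]$ with $|I|\le\ell$ and $|I\setminus[r-i]|\le 1$, i.e.\ it parameterizes by the width $i\in\{0,\ldots,\Delta-1\}$ of a trailing window that may still receive at most one further element; this forward-looking state yields the clean two-way recurrence $\opt^2([r],i,\ell)=\max\bigl(x_r+\opt^2([r-i],\Delta-i,\ell-1),\,\opt^2([r-1],i-1,\ell)\bigr)$ with $O(1)$ work per state and no auxiliary maxima. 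So the paper's encoding buys a simpler implementation and correctness argument, while yours makes the combinatorial structure (only the index two places back in the chosen sequence is binding) explicit; both give $O(k\Delta n)$. One small point to pin down: the paper's formal definition uses windows of $\Delta-1$ consecutive integers, under which your triple condition would read $i_{t+2}-i_t\ge\Delta-1$; the threshold $\ge\Delta$ you use matches the convention the paper actually applies elsewhere (e.g.\ in the proof of Lemma~\ref{lem:slice} and in the claim that $1$-spike coincides with $\Delta$-separated), so just make sure the window length and the triple threshold are kept consistent in whichever convention you adopt.
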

\begin{proof}
To see this, let $\opt^2([r],i,\ell)$ be 
the objective value of an optimal solution $I$ of \prob{2-Spike Separated Sparsity} 
on the vector $x_{[r]}$ such that $|I|\le\ell$ and $|I \setminus [r-i]| \le 1$.
Here, $r \in [n]$, $i \in \{0,\ldots,\Delta-1\}$ and $\ell\in [k]$.
Let us write $\opt^2([r],i,\ell)=0$ if $r\leq 0$ or $\ell\le 0$.
We observe that $\opt^2([r],0,\ell)=\opt^2([r],1,\ell)$ for all $r\in[n]$ and $\ell\in[k]$, and that 
for $i\in \{1,\ldots,\Delta-1\}$ the value $\opt^2([r],i,\ell)$ can be computed as follows:
\begin{equation*}\label{2optupdate}
\opt^2([r],i,\ell)=\max (x_r+\opt^2([r-i],\Delta-i,\ell-1), \opt^2([r-1],i-1,\ell))
\end{equation*}
This gives rise to an $\bigO(k\Delta n)$ time dynamic programming algorithm that 
outputs the desired values $\opt^2([n],\ell)=\opt^2([n],0,\ell)$, $\ell\in[k]$. The corresponding solutions can be recovered by standard techniques. 
\end{proof}

\section{Head approximation}

In this section we describe a head approximation for \prob{$p$-Spikes Separated Sparsity}.
As a subroutine it needs an exact algorithm which we denote~$A$ below.
For instance, this could our 
a dynamic programming algorithm if $p=1,2$. As the exact algorithm is only 
called for smaller subinstances, the head approximation has a better asymptotic performance.

The main algorithm generates a number of (still large) subinstances that then 
are solved with another method 
 (Algorithm~\ref{slicesolver}) that we will treat afterwards. 

\begin{algorithm}\caption{}\label{headalgo}
\inp An instance $(n,x,\Delta,k)$ of \prob{$p$-Spikes Separated Sparsity} and $\epsilon > 0$.\\
\out A feasible solution.
\begin{algorithmic}[1]
\State Let $\lambda \in \mathbb N$ be the smallest integer such that $\lambda \ge \epsilon^{-1}$.
\For{$\nu=0,\ldots,\lambda$}
\State Set $b=(\lambda+1)\Delta$.
\State Compute the set $S_\nu$ defined as $[n]\sm\bigcup_{j=0}^\infty \{jb+\nu\Delta+1,\ldots,jb+\nu\Delta+\Delta\}$.
\State Solve $\spa^p(S_\nu,k)$ with Algorithm~\ref{slicesolver} optimally, and let $I_\nu$ be the obtained feasible solution.
\EndFor
\State Return $I_{\tau}$ with $x(I_{\tau})=\max_{\nu}x(I_{\nu})$.
\end{algorithmic}
\end{algorithm}

As we will use this twice, once for the head, and once for the 
tail approximation, we prove that at least one of the sets $S_0,\ldots, S_\lambda$
always covers almost all of the weight of any vector $z$.

\begin{lemma}\label{zlem}
For every $z\in\mathbb R^n_{\geq 0}$ there is some $\nu\in\{0,\ldots,\lambda\}$ with 
\[
z(S_\nu)\geq \frac{\lambda}{\lambda+1}z([n]).
\]
\end{lemma}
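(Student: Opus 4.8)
The plan is to turn the statement into a simple averaging argument over the $\lambda+1$ \emph{deleted} sets. For $\nu\in\{0,\ldots,\lambda\}$ write $R_\nu=[n]\sm S_\nu$, so that $R_\nu$ is exactly the (truncated) union of the blocks $B_{j,\nu}=\{jb+\nu\Delta+1,\ldots,jb+\nu\Delta+\Delta\}$ that Algorithm~\ref{headalgo} removes. Since $z$ is non-negative and $z(S_\nu)=z([n])-z(R_\nu)$, it suffices to find a $\nu$ with $z(R_\nu)\leq \frac{1}{\lambda+1}z([n])$.

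First I would establish the key structural fact: the sets $R_0,\ldots,R_\lambda$ partition $[n]$. The point is that $b=(\lambda+1)\Delta$, so within each window $\{jb+1,\ldots,(j+1)b\}$ the $\lambda+1$ blocks $B_{j,0},\ldots,B_{j,\lambda}$ are consecutive, pairwise disjoint, each of length $\Delta$, and together tile the whole window. Concretely, given $i\in[n]$, writing $i-1=qb+s$ with $0\leq s<b$ shows that $i$ lies in $B_{q,\nu}$ for the unique $\nu=\lfloor s/\Delta\rfloor\in\{0,\ldots,\lambda\}$. Hence every index of $[n]$ belongs to exactly one $R_\nu$, which gives
\[
\sum_{\nu=0}^{\lambda} z(R_\nu)=z([n]).
\]

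With the partition in hand the conclusion is immediate: the $\lambda+1$ summands $z(R_\nu)$ add up to $z([n])$, so at least one of them is at most the average $\frac{1}{\lambda+1}z([n])$. Choosing such a $\nu$ yields $z(S_\nu)=z([n])-z(R_\nu)\geq \frac{\lambda}{\lambda+1}z([n])$, as desired.

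The only thing requiring care is the bookkeeping at the right end of $[n]$, where the final window may be incomplete so that some blocks are truncated or empty. This is harmless: truncating blocks to $[n]$ neither creates overlaps nor leaves any index of $[n]$ uncovered, so both disjointness and coverage---and hence the displayed identity---survive. I therefore expect no real obstacle here; the entire content of the lemma is the observation that the deletions performed for the different shifts $\nu$ partition the ground set, after which pigeonhole does the rest.
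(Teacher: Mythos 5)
Your proof is correct and takes essentially the same route as the paper: the paper's proof rests on the observation that every $i\in[n]$ lies in exactly $\lambda$ of the sets $S_\nu$ (equivalently, in exactly one deleted set $R_\nu$) and then averages $\sum_\nu z(S_\nu)=\lambda z([n])$ over the $\lambda+1$ choices of $\nu$. You run the identical pigeonhole argument on the complements $R_\nu$ instead, and merely add the explicit division-with-remainder check that the deleted blocks tile $[n]$, which the paper asserts without proof.
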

\begin{proof}
As every $i\in[n]$ appears in exactly $\lambda$ of the sets $S_\nu$ it follows that
\begin{align*}
(\lambda+1)\max_{\nu=0,\ldots,\lambda}z(S_\nu)\geq 
\sum_{\nu=0}^\lambda z(S_\nu) &= \lambda z([n]).
\end{align*}
Choosing $\nu$ as the index that achieves the maximum yields the desired $S_\nu$.
\end{proof}

Building on the above lemma, we can prove the desired approximation guarantee.
We assume here, and we prove it later in Lemma~\ref{lem:slice}, that Algorithm~\ref{slicesolver} works correctly.

\begin{lemma}\label{lem:apx}
Algorithm~\ref{headalgo} is a $(1-\epsilon)$-head approximation for \prob{$p$-Spikes Separated Sparsity}.
\end{lemma}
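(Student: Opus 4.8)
The plan is to start from an optimal solution $I^*$ of $\spa^p([n],k)$ and to show that the output $I_\tau$ of Algorithm~\ref{headalgo} retains at least a $(1-\epsilon)$-fraction of its weight $x(I^*)$. The guiding idea is that one of the periodically punctured index sets $S_\nu$ already contains almost all of the weight that $I^*$ places on $[n]$, and that the part of $I^*$ lying inside this $S_\nu$ is itself a legal candidate for the subinstance $\spa^p(S_\nu,k)$ that the algorithm solves optimally. So the best subinstance solution can only be better than this restricted candidate.

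Concretely, I would apply Lemma~\ref{zlem} to the vector $z=x_{I^*}$, which is supported exactly on $I^*$ and satisfies $z([n])=x(I^*)$. This produces an index $\nu$ with $z(S_\nu)\ge\frac{\lambda}{\lambda+1}x(I^*)$, and since $z$ vanishes outside $I^*$ we have $z(S_\nu)=x(I^*\cap S_\nu)$. Next I would verify that $I^*\cap S_\nu$ is a feasible solution of $\spa^p(S_\nu,k)$: it is contained in $S_\nu$ by construction, its size is at most $|I^*|\le k$, and it inherits the $p$-spikes $\Delta$-separated property from $I^*$, since any window condition satisfied by $I^*$ is a fortiori satisfied by every subset. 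Because $I_\nu$ is (by the correctness of Algorithm~\ref{slicesolver}, which we assume via Lemma~\ref{lem:slice}) an optimal solution of $\spa^p(S_\nu,k)$, we get $x(I_\nu)\ge x(I^*\cap S_\nu)\ge\frac{\lambda}{\lambda+1}x(I^*)$. As the algorithm returns the best of the $I_\nu$, this yields $x(I_\tau)\ge x(I_\nu)\ge\frac{\lambda}{\lambda+1}x(I^*)$.

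To finish I would convert the factor $\frac{\lambda}{\lambda+1}$ into the claimed $(1-\epsilon)$: from $\lambda\ge\epsilon^{-1}$ we have $\frac{1}{\lambda+1}\le\frac{1}{\lambda}\le\epsilon$, so that $\frac{\lambda}{\lambda+1}=1-\frac{1}{\lambda+1}\ge 1-\epsilon$, giving $x(I_\tau)\ge(1-\epsilon)x(I^*)$ as required.

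I do not anticipate a serious obstacle, since the real work is isolated elsewhere: the counting fact that each index lies in exactly $\lambda$ of the sets $S_\nu$ is already handled in Lemma~\ref{zlem}, and the correctness of the slice solver is handled in Lemma~\ref{lem:slice}. The one point deserving care is the feasibility step, where I must confirm that passing from $I^*$ to $I^*\cap S_\nu$ cannot violate the separation constraint; for the approximation bound this reduces to the easy observation that the $p$-spikes $\Delta$-separated property is closed under taking subsets. The more delicate role of the gap length $\Delta$, namely that it lets the slices of $S_\nu$ decouple so the subinstances can be solved quickly, is what Algorithm~\ref{slicesolver} exploits and is not needed here.
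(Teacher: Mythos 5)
Your proof is correct and follows exactly the paper's argument: apply Lemma~\ref{zlem} to $z=x_{I^*}$, observe that $I^*\cap S_\nu$ is feasible for $\spa^p(S_\nu,k)$ so the optimal slice solution dominates it, and conclude via $\frac{\lambda}{\lambda+1}\ge 1-\epsilon$. You merely spell out a few steps (subset-closedness of the $p$-spikes property, the arithmetic for $1-\epsilon$) that the paper leaves implicit.
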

\begin{proof}
Let $I^*$ be an optimal solution of $\spa^p([n],k)$, and set $x^*=x_{I^*}$.
Thus $x^*([n])=\opt^p([n],k)$. 
Moreover, note that $I^*\cap S_\nu$ is a feasible solution of $\spa^p(S_\nu,k)$ for every $\nu\in\{0,\ldots,\lambda\}$.
Thus $x^*(S_\nu)\leq\opt^p(S_\nu,k)$. With $\nu$ as in Lemma~\ref{zlem} we then obtain
\[
\opt^p(S_\nu,k)\geq x^*(S_\nu)\geq \frac{\lambda}{\lambda+1}x^*([n])
=\frac{\lambda}{\lambda+1}\opt^p([n],k)
\]
This and $\frac{\lambda}{\lambda+1} \ge 1-\epsilon$ finish the proof.
\end{proof}

We remark that the analysis is tight:
take $x=\1$, $k=n$, and $\Delta=1$.
One can also construct examples with larger $\Delta$.

Let us now solve $\spa^p(S_\nu,k)$ optimally.
For a set $S\subseteq [n]$ of integers and an $x\in\mathbb R^n_{\geq 0}$ and integer $\Delta$,
call a subset $B\subseteq [n]$ a \emph{block} of $S$ if $B\cap S$ is non-empty and if $B$
is a set of consecutive integers from $[n]$ that is minimal subject to the property that 
whenever $i\in B$ with $x_i\neq 0$ and $j\in S$ with $x_j\neq 0$ and $|i-j|<\Delta$ then $j\in B$.

Note that the blocks of $S$ are disjoint, and that, 
more strongly, for two distinct blocks $B_1,B_2$ we have that
 $|i-j|\geq\Delta$ for any $i\in B_1$ and $j\in B_2$.
Moreover observe that for every $\nu$ each block of $S_\nu$ as in Algorithm~\ref{headalgo}
has size at most $\lambda\Delta$.

Algorithm~\ref{slicesolver} that we need in order to complete the description of 
our head approximation algorithm calls, as a subroutine, an exact algorithm $A$ 
that solves \prob{$p$-Spikes Separated Sparsity} restricted to each block. 
In fact, we need an algorithm that computes $\opt^p(B,\ell)$ for each $0\leq\ell\leq \left\lceil \frac{|B|}{\Delta} \right\rceil$, where $B$ is a block.
For $p=1,2$, this can be done, quite efficiently, with the dynamic programming algorithms as outlined
in the previous section. However, any algorithm meeting the above requirements could be used in place of $A$.

\begin{algorithm}\caption{}\label{slicesolver}
\inp An instance $\spa^p(S,k)$ with maximal block size at most~$\lambda\Delta$\\
\out An optimal solution of $\spa^p(S,k)$
\begin{algorithmic}[1]
\State Compute the blocks $B_1,\ldots, B_s$ of $S$.
\State For each block $B_t$ compute $\lambda_t=\left\lceil \frac{|B_t|}{\Delta} \right\rceil$ and call algorithm $A$ to 
compute $\opt^p(B_t,\ell)$ for each $\ell\in[\lambda_t]$.
\State For all $t\in [s]$ and all $\ell\in[\lambda_t]$ 
set $q^t_\ell=\opt^p(B_t,\ell)-\opt^p(B_t,\ell-1)$.
\State Compute $Q \subseteq \{(t,\ell) : t \in[s] , \ell \in [\lambda_t]\}$ with $|Q|\leq k$ such that $\sum_{(t,\ell)\in Q}q^t_\ell$ is maximal among all choices of $Q$.
\State Delete each $(t,\ell)$ from $Q$ for which $q^t_\ell=0$. 
\State Return the set $T=\bigcup_{t=1}^s T_t$, where $T_t$ is the optimal solution of $\spa(B_t,j)$ and $j = |Q \cap \{(t,\ell): \ell \in [\lambda_t]\}|$.
\end{algorithmic}
\end{algorithm}

\begin{lemma}\label{lem:slice}
Given an algorithm $A$ that solves instances of \prob{$p$-Spikes Separated Sparsity} optimally,
Algorithm~\ref{slicesolver} solves $\spa^p(S,k)$ optimally.
\end{lemma}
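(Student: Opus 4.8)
The plan is to reduce $\spa^p(S,k)$ to an independent allocation problem across the blocks and then solve that allocation by a greedy argument that exploits the concavity of the per-block optima.

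First I would establish the \emph{decomposition}: a set $I\subseteq S$ is $p$-spikes $\Delta$-separated if and only if $I\cap B_t$ is $p$-spikes $\Delta$-separated for every block $B_t$. The forward direction is trivial; for the converse, recall that any two indices in distinct blocks are at distance at least $\Delta$, so no window $Z$ of $\Delta-1$ consecutive integers can meet two different blocks. Hence the window constraint defining $p$-spikes separation never couples two blocks, and feasibility is decided block by block. Since the blocks cover every nonzero index of $S$, and deleting a zero-weight index from a solution never hurts feasibility or objective, it follows that
\[
\opt^p(S,k)=\max\Big\{\textstyle\sum_{t=1}^s\opt^p(B_t,\ell_t)\ :\ \ell_t\ge 0,\ \sum_{t=1}^s\ell_t\le k\Big\}.
\]
I would also record that the allocation may be capped at $\ell_t\le\lambda_t=\lceil |B_t|/\Delta\rceil$: no feasible subset of $B_t$ exceeds this many elements, so $\opt^p(B_t,\cdot)$ is constant beyond $\lambda_t$ and the marginals $q^t_\ell$ vanish there. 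This is why Algorithm~\ref{slicesolver} only computes $\opt^p(B_t,\ell)$ for $\ell\in[\lambda_t]$.

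Second, and this is where the real work lies, I would prove that for each block the map $\ell\mapsto\opt^p(B_t,\ell)$ is \emph{concave}, equivalently that the marginals $q^t_\ell=\opt^p(B_t,\ell)-\opt^p(B_t,\ell-1)$ are non-increasing in $\ell$. I would establish this by an exchange argument. Take an optimal solution $I$ of $\spa^p(B_t,\ell+1)$ and an optimal solution $J$ of $\spa^p(B_t,\ell-1)$, merge their indices into one sorted list of length $|I|+|J|\le 2\ell$, and distribute the list into two sets $I',J'$ by alternating positions. Any window $Z$ of $\Delta-1$ consecutive integers meets at most $p$ indices of $I$ and at most $p$ of $J$, hence at most $2p$ indices of the merged list; since these lie in a contiguous stretch of the sorted list, the alternating assignment places at most $p$ of them in each of $I'$ and $J'$. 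Thus $I'$ and $J'$ are both $p$-spikes $\Delta$-separated, each of size at most $\ell$, and together they carry the full weight $x(I)+x(J)$. Consequently $2\opt^p(B_t,\ell)\ge x(I')+x(J')=\opt^p(B_t,\ell+1)+\opt^p(B_t,\ell-1)$, which is precisely concavity. I expect this to be the main obstacle, since everything downstream hinges on the per-block marginals being sorted.

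Finally, I would invoke concavity to justify the greedy selection in steps 4--6. The capped allocation problem above, with each $\opt^p(B_t,\cdot)$ concave, is the classical task of distributing $k$ units among concave reward functions, and is solved optimally by repeatedly assigning a unit to the block with the currently largest marginal, i.e.\ by choosing the $k$ largest values among all $q^t_\ell$, which is exactly the set $Q$ of step 4. Because the marginals are non-increasing within each block, $Q$ automatically selects, for each $t$, a prefix $q^t_1,\dots,q^t_{j_t}$, so that $\sum_{(t,\ell)\in Q}q^t_\ell=\sum_t\opt^p(B_t,j_t)$ with $\sum_t j_t\le k$. Deleting the zero marginals in step 5 leaves this value unchanged and merely avoids wasting budget on already-saturated blocks. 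Reassembling the per-block optima of sizes $j_t$ in step 6 then yields, by the decomposition of the first paragraph, a feasible $p$-spikes $\Delta$-separated subset of $S$ of size at most $k$ whose weight equals the decomposed maximum, namely $\opt^p(S,k)$. Hence Algorithm~\ref{slicesolver} is optimal.
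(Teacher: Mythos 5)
Your proposal is correct and follows essentially the same route as the paper: the heart of both arguments is the concavity of $\ell\mapsto\opt^p(B_t,\ell)$, proved by the same interleaving exchange on the merged multiset of two optimal per-block solutions, after which the $k$ largest marginals form per-block prefixes (up to ties, which both you and the paper resolve by a harmless ``we may assume'') and reassemble into an optimal solution. The only cosmetic difference is that you state the block decomposition of $\opt^p(S,k)$ as an explicit equality up front, whereas the paper derives the two directions as separate inequalities.
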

While the lemma might sound a bit tautological at first reading, it actually makes sense:
the point here is not that we can solve  \prob{$p$-Spikes Separated Sparsity}
at all but that we only need to call an exact algorithm for small slices of the whole ground set,
which then leads to a better performance.
\begin{proof}[Proof of Lemma~\ref{lem:slice}]
We start by proving 
\begin{equation}\label{margins}
q^t_\ell\leq q^t_{\ell-1} \emtext{ for all $t,\ell$ with } t\in [s]\emtext{ and }2\leq\ell\leq\lambda_t.
\end{equation}
For this we show, equivalently, that
\begin{equation}
2\cdot\opt^p(B_t,\ell-1)\geq\opt^p(B_t,\ell-2)+\opt^p(B_t,\ell)
\end{equation}
Consider an optimal solution $M$ of $\spa^p(B_t,\ell-2)$ and an optimal 
solution $L$ of $\spa^p(B_t,\ell)$.
Consider the disjoint union $M\cup L$. That is, if some element appears in $M$ and in $L$, we consider
it to appear twice in $M\cup L$. In the natural order, choose for $M'$ every second element of $M\cup L$,
and let $L'$ be the other elements. Obviously, $|M'|=|L'|\le\ell-1$, and neither $M'$ nor $L'$ contains an element twice.

Suppose that $M'$ or $L'$ is not a feasible solution of $\spa^p(B_t,\ell-1)$. 
By symmetry, we may assume this is 
the case for $M'$. Then there must be a set $Z$ of $\Delta$ consecutive integers and 
$p+1$ elements $i_1< \ldots < i_{p+1}$
of $M'$ such that $i_1,\ldots ,i_{p+1}$ all lie in $Z$. 
Since $M'$ consists of every second element of $M\cup L$
there are thus $j_1,\ldots,j_p\in L'$ such that $i_1\leq j_1\leq \ldots\leq j_p\leq i_{p+1}$. 
In particular, all  of these $2p+1$ elements lie in $Z$.  Then $p+1$ of these
must belong to either $M$ or to $L$, 
which contradicts
that $M$ and $L$ are $p$-spikes $\Delta$-separated.
This proves~\eqref{margins}.

Next we prove
\begin{equation}\label{onedir}
\sum_{(t,\ell)\in Q}q^t_\ell \geq \opt^p(S,k).
\end{equation}
For this, consider a feasible solution $I$ of $\spa^p(S,k)$. Consider some $t$, and set $\ell_t=|I\cap B_t|$.
Note that $\ell_t\leq\lambda_t$ since $I$ is feasible.
In turn, since $I\cap B_t$ is a feasible solution of $\spa^p(B_t,\ell_t)$ it follows 
that $x(I\cap B_t)\leq \opt^p(B_t,\ell_t)$. 
We define a set $P\subseteq [s]\times [\lambda]$ with $|P|\leq k$ 
by including $(t,1),\ldots,(t,\ell_t)$ for every $t\in[s]$ in $P$. 
Then 
\begin{align*}
x(I)&=\sum_{t=1}^s x(I\cap B_t) \leq \sum_{t=1}^s \opt^p(B_t,\ell_t)\\
& = \sum_{t=1}^s (q^t_1+\ldots+q^t_{\ell_t}) = \sum_{(t,\ell)\in P}q^t_\ell
\leq \sum_{(t,\ell)\in Q}q^t_\ell.
\end{align*}
This proves~\eqref{onedir}. 

In view of~\eqref{onedir}, the proof of the lemma is finished once we prove that 
$T$ is a  feasible solution of $\spa^p([n],k)$ and that $x(T)=\sum_{(t,\ell)\in Q}q^t_\ell$.
To see that $T$ is a feasible solution, we note that $|T|\leq k$ as $|Q|\leq k$, and that 
$T$ is $p$-spikes $\Delta$-separated, since restricted to each block $B_t$ 
it is $p$-spikes $\Delta$-separated and since
indices from distinct blocks are at least $\Delta$ steps apart.

To determine $x(T)$,
let $t\in [s]$ and $j_t = |Q \cap \{(t,\ell): \ell \in [\lambda_t]\}|$.
By~\eqref{margins} we may assume that $(t,1),\ldots,(t,j_t)\in Q$.
Hence, $(t,m) \notin Q$ for every $m>j_t$.
This implies $x(T_t) =  q^t_1+\ldots+q^t_{j_t}$.
Hence,
\[
x(T) = \sum_{t=1}^s (q^t_1+\ldots+q^t_{j_t}) = \sum_{(t,\ell)\in Q}q^t_\ell,
\]
which completes the proof.
\end{proof}

\begin{lemma}\label{runninglem}
Let $f(r)$ be an upper bound on the running time of algorithm $A$ when run on a vector of dimension $r$.
Then Algorithm~\ref{slicesolver} can be implemented to run in time 
\[
\bigO \left(n \max_{r \in [\lambda \Delta]} 1 + \frac{f(r)}{r} \right).
\]
\end{lemma}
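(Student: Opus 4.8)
The plan is to bound the running time of Algorithm~\ref{slicesolver} line by line and to observe that the only superlinear contribution comes from the calls to $A$, which we control by an amortisation argument. Throughout, write $r_t=|B_t|$ for the size of the $t$-th block. Two structural facts drive everything. First, the blocks are pairwise disjoint subsets of $[n]$, so $\sum_{t=1}^s r_t\le n$. Second, by hypothesis every block satisfies $r_t\le\lambda\Delta$, so $f(r_t)/r_t\le\max_{r\in[\lambda\Delta]}f(r)/r$.

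First I would dispatch the cheap lines. Computing the blocks $B_1,\ldots,B_s$ (line~1) amounts to a single left-to-right scan of $[n]$ that starts a new block whenever two consecutive nonzero indices of $S$ are at least $\Delta$ apart; this costs $\bigO(n)$. The marginals $q^t_\ell$ (line~3) and the deletions (line~5) involve at most one subtraction or test per pair $(t,\ell)$. Since $\lambda_t=\lceil r_t/\Delta\rceil\le r_t$, the total number of such pairs is $\sum_t\lambda_t\le\sum_t r_t\le n$, so lines~3 and~5 each run in $\bigO(n)$.

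The crux is line~2. Running $A$ on block $B_t$ produces all values $\opt^p(B_t,\ell)$, $\ell\in[\lambda_t]$, together with the data needed to reconstruct the corresponding solutions, in time $f(r_t)$. Summing over blocks and writing $f(r_t)=r_t\cdot\bigl(f(r_t)/r_t\bigr)$ gives
\[
\sum_{t=1}^s f(r_t)\le\Bigl(\max_{r\in[\lambda\Delta]}\tfrac{f(r)}{r}\Bigr)\sum_{t=1}^s r_t\le n\max_{r\in[\lambda\Delta]}\tfrac{f(r)}{r}.
\]
The reconstruction in line~6 recovers a single solution $T_t$ per block by backtracking through the stored table, which costs $\bigO(r_t)$ and is therefore dominated by $f(r_t)$ (as $A$ must at least read its input, $f(r)=\Omega(r)$); summed over blocks, line~6 is $\bigO(n)$.

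The remaining obstacle, and the one place where ``can be implemented'' carries weight, is line~4: we must choose $Q$ of size at most $k$ maximising $\sum_{(t,\ell)\in Q}q^t_\ell$ without sorting, since a sort would cost $\bigO(n\log n)$. Here I would exploit that all $q^t_\ell\ge 0$, so an optimal $Q$ simply consists of the $k$ largest marginals; these can be extracted in $\bigO(n)$ by linear-time selection, finding the $k$-th largest value and then collecting everything above the threshold, breaking ties in favour of smaller $\ell$ to stay consistent with~\eqref{margins}. Adding up, every line runs in $\bigO(n)$ except line~2, giving total time $\bigO(n)+n\max_{r\in[\lambda\Delta]}f(r)/r$, which is $\bigO\bigl(n\max_{r\in[\lambda\Delta]}(1+f(r)/r)\bigr)$, as claimed.
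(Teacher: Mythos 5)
Your proposal is correct and follows essentially the same route as the paper: the cheap lines are handled by noting $\sum_t \lambda_t \le \sum_t |B_t| \le n$ and using linear-time selection for the $k$ largest marginals, and the calls to $A$ are amortised by writing $f(r_t)=r_t\cdot f(r_t)/r_t$ and bounding the per-element cost by $\max_{r\in[\lambda\Delta]}f(r)/r$ over the disjoint blocks. The only cosmetic difference is that the paper first records the total as an explicit maximisation over all block-size profiles before simplifying, whereas you amortise directly; the content is identical.
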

\begin{proof}
For each block $B_t$, Algorithm~\ref{slicesolver} uses algorithm $A$ as a subroutine to compute $\opt(B_t,\ell)$ for $0\leq\ell\leq\lambda_t$.
This can be done in time $f(|B_t|)$.

It remains to discuss the complexity of finding the $k$ largest elements in the vector $P=(q^1_1,\ldots,q^s_{\lambda_s})$.
This can be done in $O(n)$ time using order statistics. 
To see this, note that the vector $P$ is of dimension $\sum_{t=1}^s \lambda_t = O(n)$.

We now find the $k$-th largest element $\hat q$ of $P$ in $O(n)$ time using, for example, the \alg{Introselect} algorithm~\cite{DBLP:journals/spe/Musser97}.
Then we collect the elements of $P$ of value larger than $\hat q$ and put them into our feasible solution.
Finally, we add elements of value equal to $\hat q$ until our feasible solution contains $k$ elements in total.
We obtain a running time of $O(n)$ for this step.
The computation of the $T_t$ and $T$ are straightforward and can be done in $O(n)$ time.

We obtain a total running time of at most
\begin{equation}\label{eqn:RTcomp}
\bigO \left(\max \left\{ \sum_{t=1}^s f(b_t) : s \in \left[ \left\lceil \frac{n}{\Delta} \right\rceil \right], b \in [\lambda \Delta]^s, \sum_{t=1}^s b_t \le n-\Delta(s-1) \right\} +n \right)
\end{equation}
where $b_t$ denotes the size of the $t$-th block $B_t$ and we used the observation that if there are $s$ blocks there are at least $n-\Delta(s-1)$ elements of $[n]$ not contained in a block.
Introducing the running time per element $\frac{f(r)}{r}$ we may simplify~\eqref{eqn:RTcomp} to
\[
\bigO \left(n \max_{r \in [\lambda \Delta]} 1 + \frac{f(r)}{r} \right),
\]
and hence the proof is complete.
\end{proof}

We can now prove Lemma~\ref{thm:head}.

\begin{proof}[Proof of Lemma~\ref{thm:head}]
Lemma~\ref{lem:apx} says that Algorithm~\ref{headalgo} is a $(1-\epsilon)$-approximation.

To compute the running time, first observe that, by Lemma~\ref{dplem}, dynamic 
programming in place of algorithm~$A$ has a running time of $f(r) = \bigO(r \cdot \frac{r}{\Delta})$,
and thus Lemma~\ref{runninglem} yields a running time of
\[
\bigO \left(n \max_{r \in [\lambda \Delta]} 1 + \frac{f(r)}{r} \right) = \bigO \left(n \max_{r \in [\lambda \Delta]} 1 + \frac{r}{\Delta} \right) =  \bigO (\lambda n)
\]
for Algorithm~\ref{slicesolver}.

Consequently, we obtain a running time of $\bigO(\lambda^2 n)$ for Algorithm~\ref{headalgo}.
Then, $\bigO(\lambda^2 n)=\bigO(\epsilon^{-2}n)$, by the choice of $\lambda$,
 and thus the proof is complete.
\end{proof}

\begin{theorem}\label{thm:2-head}
For \prob{$2$-Spikes Separated Sparsity}, Algorithm~\ref{headalgo} is a $(1-\epsilon)$-head approximation that 
can be implemented to run in $\bigO(\epsilon^{-2} \Delta n)$ time.
\end{theorem}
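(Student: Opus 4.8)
The plan is to dispatch the approximation guarantee with no new work and to spend the effort on a fresh running-time computation, obtained by feeding the $2$-spike dynamic program of Lemma~\ref{2dplem} into the generic block analysis of Lemma~\ref{runninglem}. First I would note that Lemma~\ref{lem:apx} already establishes that Algorithm~\ref{headalgo} is a $(1-\epsilon)$-head approximation for \prob{$p$-Spikes Separated Sparsity} for \emph{every} $p$, and in particular for $p=2$; that half of the theorem is therefore immediate, and it remains only to bound the running time when the exact subroutine $A$ invoked by Algorithm~\ref{slicesolver} is taken to be the $2$-spike dynamic program.

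For the running time, the key quantity is $f(r)$, an upper bound on the time $A$ needs on a block of size $r\le\lambda\Delta$. Here I would use that inside such a block only the values $\opt^2(B,\ell)$ for $\ell$ up to $O(\lceil r/\Delta\rceil)$ are ever required: a $2$-spikes $\Delta$-separated subset of a block of size $r$ can contain only $O(r/\Delta)$ indices, so the relevant cardinality parameter is $O(r/\Delta)$ rather than the global $k$. Plugging this into the bound $\bigO(k\Delta n)$ of Lemma~\ref{2dplem}, with $n=r$ and $k=O(r/\Delta)$, yields
\[
f(r) = \bigO\!\left(\left\lceil \tfrac{r}{\Delta}\right\rceil \Delta\, r\right) = \bigO(r^2 + \Delta r),
\]
and consequently $\tfrac{f(r)}{r} = \bigO(r+\Delta)$, which for $r\le\lambda\Delta$ is $\bigO(\lambda\Delta)$.

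Next I would invoke Lemma~\ref{runninglem}, which bounds the running time of Algorithm~\ref{slicesolver} by
\[
\bigO\!\left(n \max_{r\in[\lambda\Delta]} 1 + \frac{f(r)}{r}\right) = \bigO(\lambda\Delta n).
\]
Finally, Algorithm~\ref{headalgo} calls Algorithm~\ref{slicesolver} once for each $\nu\in\{0,\ldots,\lambda\}$, i.e.\ $\lambda+1$ times, so its total running time is $\bigO(\lambda^2\Delta n)$. Since $\lambda$ is the smallest integer with $\lambda\ge\epsilon^{-1}$ we have $\lambda=\bigO(\epsilon^{-1})$, whence $\bigO(\lambda^2\Delta n)=\bigO(\epsilon^{-2}\Delta n)$, as claimed.

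The main obstacle, I expect, is the determination of $f(r)$: one must resist plugging the global cardinality bound $k$ into Lemma~\ref{2dplem} and instead exploit that each block of $S_\nu$ has size at most $\lambda\Delta$, so that only $O(r/\Delta)$ spikes can ever be placed inside it. This is precisely what keeps the per-block cost at $\bigO(r^2)$ and the amortized per-element cost at $\bigO(\lambda\Delta)$, mirroring the $p=1$ calculation in the proof of Lemma~\ref{thm:head} but with the extra factor $\Delta$ arising from the slower $2$-spike dynamic program of Lemma~\ref{2dplem}.
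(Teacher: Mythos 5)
Your proposal is correct and follows essentially the same route as the paper: the paper's proof of Theorem~\ref{thm:2-head} simply says to repeat the proof of Lemma~\ref{thm:head} with Lemma~\ref{2dplem} in place of Lemma~\ref{dplem}, which amounts exactly to your computation $f(r)=\bigO(\lceil r/\Delta\rceil\,\Delta r)=\bigO(r^2+\Delta r)$ and hence $f(r)/r=\bigO(\lambda\Delta)$ fed into Lemma~\ref{runninglem}. Your write-up just makes explicit the per-block cardinality bound $k=\bigO(r/\Delta)$ that the paper leaves implicit.
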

\begin{proof}
The proof is the same as that of Lemma~\ref{thm:head},
with the single exception that we use Lemma~\ref{2dplem} rather than Lemma~\ref{dplem}.
\end{proof}

Algorithm~\ref{headalgo} can also speed up the solution of 
\prob{$p$-Spikes Separated Sparsity} for larger $p$, provided we have access to some exact
algorithm. Especially for large $p$, the dynamic programming approach 
does not seem to be feasible anymore when $n$ grows large. We may, however, encode the problem
as an integer program with a totally unimodular constraint matrix, as has been demonstrated by Hegde et 
al.~\cite{HDC09}, and then solve the linear relaxation. 
The resulting speed-up can be directly read off of Lemma~\ref{runninglem}.

\section{Tail approximation}

In this section we restrict our attention to the 1-spike case exclusively.
Recall that an algorithm $A$ for \prob{Separated Sparsity} has tail approximation guarantee $\alpha$
if 
\[
x(\overline{A(x)})\le \alpha x(\overline{I^*})\text{ for all }x,
\]
where $I^*$ is an optimal solution.

Algorithm~\ref{headalgo} has no constant tail approximation guarantee. 
To see that, note that if $x$ is the all-ones vector, $k=n$, and $\Delta = 1$, the optimal solution $I^*$ is $[n]$.
Consequently, $x(\overline{I^*}) = 0$ and thus Algorithm~\ref{headalgo} would need to solve
the instance optimally to  have a constant tail approximation guarantee. This, however, 
is not the case as $[n]\nsubseteq S_\nu$ for all~$\nu$ provided that $n$ is sufficiently large.

\medskip
There is, however, a very simple algorithm that has a constant tail approximation guarantee:
simply pick the best feasible solution among the $k$ largest elements of $x$ (Algorithm~\ref{stupidalgo}).
We briefly discuss the algorithm because it is so simple.

\begin{algorithm}\caption{}\label{stupidalgo}
\inp An instance $\spa([n],k)$\\
\out A feasible solution to $\spa([n],k)$
\begin{algorithmic}[1]
\State Compute index set $L$ of $k$ largest elements in $x$.
\State Solve $\spa(L,k)$ optimally, and output the solution $I$.
\end{algorithmic}
\end{algorithm}

\begin{proposition}
Algorithm~\ref{stupidalgo} has tail approximation guarantee~$2$.
\end{proposition}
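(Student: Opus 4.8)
The plan is to compare the output $I$ of Algorithm~\ref{stupidalgo} directly against a global optimum $I^*$ of $\spa([n],k)$, and to do so by splitting the complement $\overline{I}$ into two pieces, each of which I bound by $x(\overline{I^*})$. Throughout, let $L$ be the index set of the $k$ largest entries of $x$ computed in the first line, so that the returned set $I$ is an optimal solution of $\spa(L,k)$ and in particular $I\subseteq L$. The statement to be proved is $x(\overline{I})\le 2\,x(\overline{I^*})$.

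First I would record the decomposition
\[
x(\overline{I})=x(\overline{L})+\bigl(x(L)-x(I)\bigr),
\]
which holds because $I\subseteq L$ forces $\overline{I}$ to be the disjoint union of $\overline{L}$ and $L\setminus I$. Then I would bound the two summands separately against $x(\overline{I^*})$. For the first summand, since $L$ collects the $k$ largest entries of $x$ while $|I^*|\le k$, we have $x(L)\ge x(I^*)$, and hence $x(\overline{L})=x([n])-x(L)\le x([n])-x(I^*)=x(\overline{I^*})$. The second summand is handled by the one substantive observation in the argument: the set $I^*\cap L$ is a feasible solution of $\spa(L,k)$, being a subset of $L$ that is $\Delta$-separated (as a subset of the $\Delta$-separated set $I^*$) and of size at most $k$. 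Optimality of $I$ for $\spa(L,k)$ then yields $x(I)\ge x(I^*\cap L)$, so that $x(L)-x(I)\le x(L)-x(I^*\cap L)=x(L\setminus I^*)$; and since $L\setminus I^*\subseteq\overline{I^*}$ with $x$ nonnegative, this is at most $x(\overline{I^*})$. Adding the two bounds gives $x(\overline{I})\le 2\,x(\overline{I^*})$.

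I expect the only real point of care to be the feasibility argument for $I^*\cap L$: one must check that restricting a $\Delta$-separated optimum to $L$ preserves both separation and the cardinality budget, which is exactly what licenses the comparison $x(I)\ge x(I^*\cap L)$. Everything else is bookkeeping with nonnegativity and the defining property of $L$ as a top-$k$ set, and no feature of $\Delta$-separation is used beyond the fact that subsets of separated sets remain separated. Finally, I would note that the guarantee is essentially best possible: taking $\Delta=2$, $k=2$ and $x=(1,1,1-\delta,0)$ forces $I$ to collect weight $1$ while $I^*=\{1,3\}$ leaves tail weight $1$, so the ratio $x(\overline{I})/x(\overline{I^*})$ tends to $2$ as $\delta\to 0$.
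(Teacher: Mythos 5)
Your proof is correct and takes essentially the same route as the paper's: the identical decomposition $x(\overline{I})=x(\overline{L})+x(L\setminus I)$, with the first term bounded via $x(L)\ge x(I^*)$ and the second via feasibility of $I^*\cap L$ for $\spa(L,k)$ and the inclusion $L\setminus I^*\subseteq\overline{I^*}$. Your closing tightness example is a small variant of the one the paper gives right after its proof.
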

We note that, with a bit of care, the algorithm can be implemented to run in $\bigO(n)$ time.
\begin{proof}
Let $I^*$ be an optimal solution, and set $x_{I^*}=x^*$. Then $x(I^*)\leq x(L)$ and thus
\begin{equation}\label{stupid}
x(\overline{I^*})\geq x(\overline L).
\end{equation}
Moreover, $x(I^*\cap L)\leq x(I)$ since $I$ is an optimal solution of $\spa(L,k)$,
while $I^*\cap L$ is some feasible solution of $\spa(L,k)$. Thus $x(L\sm (I^*\cap L))\geq x(L\sm I)$
and 
\begin{equation}\label{stupidtoo}
x(\overline{I^*})\geq x(L\sm (I^*\cap L))\geq x(L\sm I).
\end{equation}
As $x(\overline I)=x(\overline L) + x(L\sm I)$ we obtain with~\eqref{stupid} and~\eqref{stupidtoo}
that $x(\overline I)\leq 2x(\overline{I^*})$.
\comment{
Recall that we can compute the index set $L$ of the $k$ largest elements in $O(n)$ time using the \alg{Introselect} algorithm~\cite{DBLP:journals/spe/Musser97}.
To solve $\spa(L,k)$ optimally, we run dynamic programming on the vector $y=x_L$ to compute $\opt_y([n],n)$.
According to Lemma~\ref{lem:dynproglargek}, we can compute and output the feasible solution in $O(n)$ time.
After removing all elements not in $L$, we obtain the optimal solution $I$ of $\spa(L,k)$.}%
\end{proof}

Unfortunately, the analysis is tight. Indeed, consider the instance $x=(1,1,1)$, $\Delta=2$ und $k=2$. 
Then the algorithm might find $L=\{1,2\}$, 
which might result in $I=\{1\}$. Then $x(\overline I)=2$ but $I^*=\{1,3\}$ and $x(\overline{I^*})=1$.

\medskip
We now come to our tail approximation of arbitrary precision.
Before stating the algorithm, we need two technical lemmas.

\begin{lemma}\label{lem:am}
Let $n$ be a positive integer, $W\subseteq [n]$, $A , B\subseteq W$, and $x\in\mathbb R^n_{\geq 0}$.
Assume that
\[
x(A)\geq \alpha x(B) \text{ and }x(B) \leq \mu x(W)
\]
for some reals $0<\alpha,\mu<1$. Then 
\[
x(W\sm A) \le \frac{\alpha}{1-\frac{1-\alpha}{1-\mu\alpha}}\cdot x(W\sm B).
\]
\end{lemma}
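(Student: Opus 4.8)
The plan is to reduce the statement to an inequality among three real numbers and then verify it by elementary algebra. Since $A,B\subseteq W$, we have $x(W\sm A)=x(W)-x(A)$ and $x(W\sm B)=x(W)-x(B)$, so writing $w=x(W)$, $a=x(A)$, $b=x(B)$ the claim becomes
\[
w-a\le C\,(w-b),\qquad C=\frac{\alpha}{1-\frac{1-\alpha}{1-\mu\alpha}},
\]
with the hypotheses now reading $a\ge\alpha b$ and $b\le\mu w$, and with $0\le b\le w$ automatic from $B\subseteq W$.

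First I would simplify the constant. A short computation gives $1-\frac{1-\alpha}{1-\mu\alpha}=\frac{\alpha(1-\mu)}{1-\mu\alpha}$, hence $C=\frac{1-\mu\alpha}{1-\mu}$. Note $C>1$, since $\mu\alpha<\mu$ forces $1-\mu\alpha>1-\mu>0$; this is a reassuring sanity check, as the tail weight of $A$ should be allowed to be somewhat larger than that of $B$.

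Next I would discharge the hypothesis $a\ge\alpha b$ cheaply: it gives $w-a\le w-\alpha b$, so it suffices to prove $w-\alpha b\le C(w-b)$, i.e., after clearing the positive denominator $1-\mu$,
\[
(1-\mu)(w-\alpha b)\le(1-\mu\alpha)(w-b).
\]
Expanding both sides and cancelling common terms, the difference of the right-hand and left-hand sides collapses to $(1-\alpha)(\mu w-b)$. Since $b\le\mu w$ by hypothesis and $\alpha<1$, this quantity is nonnegative, which is exactly the inequality we want.

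I do not anticipate a genuine obstacle here; the lemma is a routine two-variable estimate. The only points requiring care are bookkeeping ones: simplifying the stated constant to the transparent form $\frac{1-\mu\alpha}{1-\mu}$, deciding \emph{which} hypothesis to invoke at \emph{which} step (the bound $a\ge\alpha b$ handles the $A$-side, while $b\le\mu w$ is precisely what makes the final difference nonnegative), and checking the signs $1-\mu>0$ and $w-b\ge0$ so that no inequality is reversed when dividing or when passing from $a$ to $\alpha b$.
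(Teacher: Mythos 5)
Your proof is correct and takes essentially the same route as the paper: both arguments are a direct algebraic combination of the two hypotheses (the paper packages it as a convex split of $x(B)$ with a cleverly chosen weight $\gamma=\frac{1-\alpha}{1-\mu\alpha}$, you clear denominators and check a polynomial inequality). Your preliminary simplification of the constant to $\frac{1-\mu\alpha}{1-\mu}$ is a nice touch that makes the verification more transparent than the paper's.
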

\begin{proof}
Put $\gamma=\frac{1-\alpha}{1-\mu\alpha}$ and observe that $0<\gamma<1$.
Now
\begin{align*}
x(W\sm  B) & = x(W)-(1-\gamma)x(B) -\gamma x(B) \\
&\geq x(W) -\frac{1-\gamma}{\alpha}x(A) - \gamma\mu x(W)  \\
& = \frac{1-\gamma}{\alpha}x(W\sm  A) + \left(1-\frac{1-\gamma}{\alpha}\right) x(W) - \gamma\mu x(W)\\
&  = \frac{1-\gamma}{\alpha}x(W\sm  A),
\end{align*}
since
\[
1-\frac{1-\gamma}{\alpha}-\gamma\mu = \frac{1}{\alpha}(\alpha-1+\gamma-\alpha\mu\gamma)
=\frac{1}{\alpha}(\alpha-1+\gamma (1-\mu\alpha)) = 0.
\]
\end{proof}

Given a vector $x \in \mathbb R^n$ we define the \emph{tail vector} $t \in \mathbb R^n$ of $x$ by 
setting
\[
t_i = x(\{j \in [n] : j \neq i, |i-j| < \Delta\}) = \sum_{j=i-\Delta+1}^{i+\Delta-1}x_j-x_i
\]
for all $i \in [n]$.
We call an index $i \in [n]$ \emph{strong} if $x_i > t_i$ and \emph{weak} otherwise.
It is clear that for any two strong indices $i,j$ we have $|i-j|\geq \Delta$.

For a given $x \in \mathbb R^n$ let $S$ be the set of its strong indices.
We define the \emph{reduced vector} $r \in \mathbb R^n$ of $x$ by setting
\[
r_i = \begin{cases}
				x_i & \text{if } i \in S \mbox{ or } |i-j| \ge \Delta \text{ for all } j \in S\\
        0 & \text{otherwise }
\end{cases}
\]
for all $i\in [n]$.

\begin{lemma}\label{redlem}
Consider an instance $\spa_x([n],k)$, let $S$ be the set of strong
indices of $x$,
and let $r$ be the reduced vector of $x$. Let $I^*$ be an optimal solution of $\spa_x([n],k)$.
If $i\in I^*\sm S$ and $s\in S$ then $|s-i|\geq\Delta$. 
In particular, $\opt([n],k) = x(I^*) = r(I^*)$.
\end{lemma}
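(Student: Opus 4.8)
The two conclusions are of quite different character: the first is a structural claim about where the non-strong elements of any optimal solution can lie, and the second is an easy bookkeeping consequence of it. The plan is therefore to spend almost all the effort on the statement ``$|s-i|\geq\Delta$ for $i\in I^*\setminus S$ and $s\in S$'', and to read off the identity $x(I^*)=r(I^*)$ at the end.

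For the structural claim I would argue by contradiction via an exchange argument. Suppose there are $i\in I^*\setminus S$ and $s\in S$ with $|s-i|<\Delta$; note $i\neq s$ since $i\notin S$ while $s\in S$. The first observation is that $s\notin I^*$: otherwise $I^*$ would contain the two distinct indices $s$ and $i$ at distance less than $\Delta$, contradicting that $I^*$ is $\Delta$-separated. Now set $Z=\{j\in[n]:j\neq s,\ |s-j|<\Delta\}$, so that by definition $t_s=x(Z)$, and let $N=I^*\cap Z$. Since $i\in Z\cap I^*$ we have $N\neq\emptyset$. I would then replace the whole ``cluster'' $N$ by the single index $s$, i.e.\ consider $I'=(I^*\setminus N)\cup\{s\}$. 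This $I'$ is feasible: every $j\in I^*\setminus N$ has $|s-j|\geq\Delta$ (as $j\neq s$ and $j\notin Z$), and $I^*\setminus N$ is itself $\Delta$-separated, so $I'$ is $\Delta$-separated; moreover $|I'|=|I^*|-|N|+1\leq|I^*|\leq k$ because $|N|\geq1$. Comparing values, $x(I')=x(I^*)-x(N)+x_s$, and since $N\subseteq Z$ and all entries are non-negative we have $x(N)\leq x(Z)=t_s<x_s$, the last inequality because $s$ is strong. Hence $x(I')>x(I^*)=\opt([n],k)$, contradicting optimality of $I^*$. This proves the first assertion.

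The identity then follows directly from the definition of the reduced vector $r$. For $i\in I^*$ there are two cases. If $i\in S$ then $r_i=x_i$ by definition. If $i\in I^*\setminus S$, then the part just proved gives $|s-i|\geq\Delta$ for every $s\in S$, so $i$ again meets the condition guaranteeing $r_i=x_i$. Thus $r_i=x_i$ for every $i\in I^*$, whence $r(I^*)=x(I^*)$, while $x(I^*)=\opt([n],k)$ holds by optimality.

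I expect the only genuinely delicate point to be the feasibility check for the replacement set $I'$: one must be sure that swapping in $s$ does not collide with any surviving element of $I^*$, which is exactly why the cluster $N$ is defined as all of $I^*\cap Z$ rather than just the single index $i$. The observation that $s\notin I^*$, forced by $\Delta$-separation, is what makes $x_s$ a genuine gain rather than a quantity already counted, so this small step should be stated explicitly rather than glossed over.
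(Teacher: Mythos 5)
Your proposal is correct and follows essentially the same exchange argument as the paper: remove the elements of $I^*$ within distance less than $\Delta$ of $s$, insert $s$, and use $x(I^*\cap Z)\le x(Z)=t_s<x_s$ to contradict optimality. The extra details you make explicit (that $s\notin I^*$ and the feasibility/cardinality check for $I'$) are left implicit in the paper but are correctly handled.
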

\begin{proof}
Suppose that there are  $i\in I^*\sm S$ and $s\in S$ with $|s-i|<\Delta$. 
Set $C=\{j\in [n]: |s-j|< \Delta,\,s\neq j\}$, and consider $I=(I^*\sm C)\cup\{s\}$. 
Then $I$ is a feasible solution of $\spa([n],k)$, and
 as $s$ is strong, it holds that $x(C\cap I^*)\leq x(C)<x_s$, which implies $x(I)>x(I^*)$.
Obviously, this is impossible because $I^*$ is an optimal solution.
\end{proof}

We now present our tail approximation for \prob{Separated Sparsity}.
For this, fix $\epsilon >0$.

\begin{algorithm}\caption{}\label{tailalgo}
\inp An instance $(n,x,\Delta,k)$ of \prob{Separated Sparsity} and $\epsilon > 0$\\
\out A feasible solution to $\spa([n],k)$
\begin{algorithmic}[1]
\State Compute the tail vector $t$ and the set $S$ of all strong indices.
\State Compute the reduced vector $r$ of $x$.
\State Run Algorithm~\ref{headalgo} (for $p=1$ spike)
on $r$ with precision $1-\frac{\epsilon}{2}$, subject to the following modification: 
Instead of calling Algorithm~\ref{slicesolver} on $S_\nu$, call it on $S_\nu \cup S$ for $\nu=0,\ldots,\lambda$. 
\end{algorithmic}
\end{algorithm}

\begin{lemma}
Algorithm~\ref{tailalgo} returns a $(1+\epsilon)$-tail approximation of the \prob{Separated Sparsity} problem.
\end{lemma}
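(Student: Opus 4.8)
The plan is to compare the output $T$ of Algorithm~\ref{tailalgo} with an optimal solution $I^*$ of $\spa_x([n],k)$ and to establish $x(\overline T)\le(1+\epsilon)\,x(\overline{I^*})$. First I would pass to the reduced vector $r$. By Lemma~\ref{redlem} every index of $I^*$ that is not strong lies at distance at least $\Delta$ from $S$, so $r$ and $x$ agree on $I^*$ and hence $r(I^*)=x(I^*)$; since moreover $r\le x$ coordinatewise, $x(T)\ge r(T)$ and therefore $x(\overline T)\le x([n])-r(T)$. It thus suffices to bound $r(T)$ from below, and the whole argument will be carried out on $r$.

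The key point is that adding $S$ to every slice confines the loss of the head approximation to the \emph{weak} part of $I^*$. In $r$ each strong index $s$ is isolated: every $j\neq s$ with $|s-j|<\Delta$ has $r_j=0$. Consequently, incorporating $S$ into the slices neither merges blocks nor produces a block larger than $\lambda\Delta$, so by Lemma~\ref{lem:slice} the modified Algorithm~\ref{slicesolver} still solves $\spa_r(S_\nu\cup S,k)$ optimally for each $\nu$. Since $I^*\cap(S_\nu\cup S)$ is feasible for $\spa_r(S_\nu\cup S,k)$ and $T$ is chosen of maximum $r$-value over all $\nu$, we get $r(T)\ge r\big(I^*\cap(S_\nu\cup S)\big)=r(I^*)-r\big((I^*\sm S)\cap\overline{S_\nu}\big)$ for every $\nu$. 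Applying Lemma~\ref{zlem} to $z=r_{I^*\sm S}$ now yields an index $\nu$ with $r\big((I^*\sm S)\cap\overline{S_\nu}\big)\le\frac1{\lambda+1}r(I^*\sm S)$, hence $r(T)\ge r(I^*)-\frac1{\lambda+1}r(I^*\sm S)$. Translating back via $r(I^*)=x(I^*)$ and $r(I^*\sm S)=x(I^*\sm S)$ gives $x(\overline T)\le x(\overline{I^*})+\frac1{\lambda+1}x(I^*\sm S)$.

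What remains, and what I expect to be the heart of the argument, is the estimate $x(I^*\sm S)\le 2\,x(\overline{I^*})$: the weak weight gathered by the optimum is controlled by the optimum's own tail. For a weak index $i\in I^*$ we have $x_i\le t_i=x(N(i))$ with $N(i)=\{\,j : 0<|i-j|<\Delta\,\}$, and feasibility of $I^*$ forces $N(i)\subseteq\overline{I^*}$. Moreover each $j\in\overline{I^*}$ lies in $N(i)$ for at most two indices $i\in I^*$, because a window of $2\Delta-1$ consecutive integers contains at most two $\Delta$-separated indices. Summing $x_i\le x(N(i))$ over $i\in I^*\sm S$ and invoking this double-counting bound yields $x(I^*\sm S)\le 2\,x(\overline{I^*})$.

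Finally, since the head routine is invoked with precision $\epsilon/2$, the parameter satisfies $\lambda\ge 2/\epsilon$, so $\frac1{\lambda+1}x(I^*\sm S)\le\frac{2}{\lambda+1}x(\overline{I^*})<\epsilon\,x(\overline{I^*})$, which combined with the bound of the second paragraph gives $x(\overline T)\le(1+\epsilon)\,x(\overline{I^*})$. The main obstacles I anticipate are the weak-weight estimate together with verifying that the $S$-modification simultaneously (i) preserves the correctness of Algorithm~\ref{slicesolver} and (ii) restricts the approximation loss to $I^*\sm S$; the head-to-tail passage could alternatively be packaged through Lemma~\ref{lem:am}, but the direct computation above appears more transparent.
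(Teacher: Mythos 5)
Your proof is correct, and in its main ingredients it mirrors the paper's argument: pass to the reduced vector $r$, use Lemma~\ref{redlem} to see that $r$ and $x$ agree on $I^*$ and that strong indices become singleton blocks (so the block-size precondition of Algorithm~\ref{slicesolver} is met), apply Lemma~\ref{zlem} to the weak part of $I^*$, and control the weak weight of $I^*$ by a factor-$2$ double-counting bound against the tail of $I^*$ (the paper's version, $2x(W\sm I^*)\ge t(I^*\cap W)\ge x(I^*\cap W)$, is the same count restricted to the slightly smaller set $W\sm I^*$ instead of $\overline{I^*}$; either suffices). Where you genuinely diverge is the final accounting. The paper splits $x(\overline L)$ into $x(S\sm L)+x(W\sm L)$ and converts the multiplicative head guarantee on the weak part into a multiplicative tail guarantee via Lemma~\ref{lem:am} with $\mu=\tfrac23$ and $\alpha=1-\tfrac\epsilon2$, which evaluates to exactly $1+\epsilon$. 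You instead work additively, writing $x(\overline T)\le x(\overline{I^*})+\tfrac{1}{\lambda+1}\,x(I^*\sm S)\le x(\overline{I^*})+\tfrac{2}{\lambda+1}\,x(\overline{I^*})$ and using $\lambda\ge 2/\epsilon$. This bypasses Lemma~\ref{lem:am} entirely and is arguably more transparent; what it gives up is only that the constant is bounded by $1+\epsilon$ rather than computed exactly, which is all the statement requires. Your handling of the $r$-versus-$x$ issue at the output (bounding $x(\overline T)\le x([n])-r(T)$ and then $r(T)\ge r(I^*\cap(S_\nu\cup S))$) is, if anything, slightly cleaner than the corresponding step in the paper.
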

\begin{proof}
First note that the call of Algorithm~\ref{slicesolver} is valid, as the maximal block size
of $S\cup S_\nu$ with respect to $r$ is at most $\lambda\Delta$: indeed, the elements of $S$ form singleton blocks,
while every other block is contained in a block of $S_\nu$ with respect to $x$, and thus has block size 
at most $\lambda\Delta$.
 
Let $I^*$ be an optimal solution, and let $W=\overline S$ be the set of weak indices of~$x$. 
Applying Lemma~\ref{zlem} to $z=x_{I^*\cap W}$, we find a $\nu$ with
\begin{equation}\label{nuweak}
x(S_\nu\cap I^*\cap W)=z(S_\nu) \ge (1-\tfrac{\epsilon}{2}) z([n])= (1-\tfrac{\epsilon}{2}) x(I^*\cap W).
\end{equation}

Consider the feasible solution $L=(S \cup S_\nu) \cap I^*$ of $\spa_r(S\cup S_\nu,k)$.
Then
\begin{equation}\label{eqn:tailswins}
\opt_x(S\cup S_\nu,k) \geq \opt_r(S\cup S_\nu,k) \ge r(L) = x(L),
\end{equation}
where the last equality is due to Lemma~\ref{redlem}.

Next, we claim that
\begin{equation}\label{lem:weaktails}
x(W\setminus L) \le (1+\epsilon) x(W\setminus  I^*).
\end{equation}
Before we prove the claim, we note how to finish the proof of the lemma with it. 
For this, observe that 
the choice of $L$ implies $x(S\setminus L) = x(S\setminus I^*)$.
Then 
\begin{align*}
x(\overline L) &= x(S\setminus L) + x(W\setminus L)\\
& \le x(S\setminus I^*) + (1+ \epsilon)x(W\setminus I^*)\\
& \le (1+\epsilon) x(\overline{I^*}).
\end{align*}
If $I$ is the output of Algorithm~\ref{tailalgo} then this implies
\begin{align*}
x(\overline I) \leq x([n])-\opt_x(S\cup S_\nu,k)\overset{\eqref{eqn:tailswins}}{\leq} 
x([n])-x(L) =x(\overline L)\leq (1+\epsilon) x(\overline{I^*}),
\end{align*}
which is the statement of the lemma.

\medskip
Let us prove~\eqref{lem:weaktails}. 
As a consequence of Lemma~\ref{redlem}, we get
\[
2x(W\sm I^*)\geq t(I^*\cap W).
\]
This, in turn, implies 
\begin{align*}
x(W) &= x(I^* \cap W) + x(W \setminus I^*)\\
& \ge x(I^* \cap W) + \frac{1}{2} t(I^* \cap W)\\
& \ge x(I^* \cap W) + \frac{1}{2} x(I^* \cap W) = \frac{3}{2} x(I^* \cap W),
\end{align*}
where the second inequality is because of what it means for an index to be weak.

Set $A=S_\nu\cap I^*\cap  W$ and $B=I^*\cap W$. 
We may apply Lemma~\ref{lem:am} with $\mu=\frac{2}{3}$, by the preceding inequality, and 
 $\alpha=1-\frac{\epsilon}{2}$, by~\eqref{nuweak}, and obtain
\begin{align*}
x(W\sm L) = x(W\sm A) \leq \frac{\alpha}{1-\frac{1-\alpha}{1-\mu\alpha}} \cdot x(W\sm B)
 =  \frac{\alpha}{1-\frac{1-\alpha}{1-\mu\alpha}}\cdot x(W\sm I^*).
\end{align*}
As
\[
\frac{\alpha}{1-\frac{1-\alpha}{1-\mu\alpha}}
=\frac{1-\frac{\epsilon}{2}}{1-\frac{\frac{\epsilon}{2}}{1-\frac{2}{3}(1-\frac{\epsilon}{2})}} 
= \frac{1-\frac{\epsilon}{2}}{1-\frac{3 \epsilon}{2+2\epsilon}}
= \frac{1-\frac{\epsilon}{2}}{\frac{2-\epsilon}{2+2\epsilon}}
= \frac{2+\epsilon-\epsilon^2}{2-\epsilon}
= 1+ \epsilon,
\]
we have proved~\eqref{lem:weaktails} and thus the lemma.
\end{proof}

It remains to observe that the algorithm runs in linear time.

\begin{lemma}
Algorithm~\ref{tailalgo} can be implemented to run in $\bigO(\epsilon^{-2}n)$ time.
\end{lemma}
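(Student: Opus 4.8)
The plan is to bound the cost of each of the three steps of Algorithm~\ref{tailalgo} separately and to observe that the dominant contribution comes from the modified call to Algorithm~\ref{headalgo} in the third step, whose cost is governed by essentially the same analysis as in the proof of Lemma~\ref{thm:head}. Since the precision passed to Algorithm~\ref{headalgo} is $1-\frac{\epsilon}{2}$, the parameter $\lambda$ inside that subroutine is the smallest integer at least $2\epsilon^{-1}$, so $\lambda = \bigO(\epsilon^{-1})$ throughout the argument.

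First I would argue that the preprocessing in Steps~1 and~2 runs in $\bigO(n)$ time. The tail vector $t$ is a windowed sum of $x$: writing $P_i=\sum_{j\le i}x_j$ for the prefix sums, which are computable in $\bigO(n)$ time, each entry $t_i=P_{i+\Delta-1}-P_{i-\Delta}-x_i$ (with the usual boundary conventions) is obtained in constant time, and comparing $x_i$ with $t_i$ then yields the set $S$ of strong indices in $\bigO(n)$ time. For the reduced vector $r$, I would mark, in a single left-to-right sweep, every index lying within distance less than $\Delta$ of some strong index; since the strong indices are pairwise at distance at least $\Delta$, this sweep touches each index only a constant number of times and runs in $\bigO(n)$ time. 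Setting $r_i=0$ exactly on the marked non-strong indices and $r_i=x_i$ otherwise then produces $r$ in $\bigO(n)$ time.

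Next I would turn to Step~3. Here the only deviation from unmodified Algorithm~\ref{headalgo} is that, for each $\nu\in\{0,\ldots,\lambda\}$, Algorithm~\ref{slicesolver} is invoked on $S_\nu\cup S$ with respect to $r$ rather than on $S_\nu$; forming these sets costs $\bigO(n)$ per value of $\nu$, contributing only $\bigO(\lambda n)$ in total. The decisive point, already established in the proof that Algorithm~\ref{tailalgo} is a $(1+\epsilon)$-tail approximation, is that every block of $S_\nu\cup S$ with respect to $r$ has size at most $\lambda\Delta$: the strong indices form singleton blocks, while every other block is contained in a block of $S_\nu$ with respect to $x$. Hence the hypothesis of Lemma~\ref{runninglem} is met on a vector of dimension $n$ with maximal block size $\lambda\Delta$, and the bound from the proof of Lemma~\ref{thm:head} applies verbatim. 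Using the dynamic programming routine of Lemma~\ref{dplem} as the exact algorithm $A$, whose running time on a block of size $r$ is $f(r)=\bigO(r^2/\Delta)$, Lemma~\ref{runninglem} gives
\[
\bigO\left(n \max_{r\in[\lambda\Delta]} 1 + \frac{f(r)}{r}\right) = \bigO\left(n \max_{r\in[\lambda\Delta]} 1 + \frac{r}{\Delta}\right) = \bigO(\lambda n)
\]
for each of the $\lambda+1$ calls, hence $\bigO(\lambda^2 n)$ for Step~3. Combining the steps yields $\bigO(n)+\bigO(\lambda^2 n)=\bigO(\lambda^2 n)=\bigO(\epsilon^{-2}n)$, as required.

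The main obstacle to watch for is the reuse of Lemma~\ref{runninglem} under the modification: one must confirm that adding the possibly $\Theta(n/\Delta)$ strong indices of $S$ to each $S_\nu$ does not inflate the per-block cost. This is exactly why the block-size invariant matters—the singleton blocks arising from $S$ each cost only $f(1)=\bigO(1)$, and since the total size of all blocks is at most $n$, the aggregate estimate $\sum_t f(b_t)\le n\cdot\max_{r\in[\lambda\Delta]}f(r)/r$ underlying Lemma~\ref{runninglem} is preserved. The secondary point to verify is that the reduced-vector computation is genuinely linear, which the pairwise separation of strong indices guarantees.
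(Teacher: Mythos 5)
Your proposal is correct and follows essentially the same route as the paper: compute $t$, $S$ and $r$ in $\bigO(n)$ time (the paper uses the sliding-window recurrence $t_i=t_{i-1}+x_{i-1}-x_i+x_{i+\Delta-1}-x_{i-\Delta}$ where you use prefix sums, an immaterial difference), then observe that every block of $S_\nu\cup S$ with respect to $r$ has size at most $\lambda\Delta$ because strong indices sit in singleton blocks, so Lemma~\ref{runninglem} applies exactly as in the proof of Lemma~\ref{thm:head}. Your explicit accounting for the singleton blocks contributed by $S$ is a welcome elaboration of a point the paper states more tersely, but it is the same argument.
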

\begin{proof}
First note that we can compute $t$ in $\bigO(n)$ time using dynamic programming since 
\[
t_i  = t_{i-1} + x_{i-1} - x_i + x_{i+\Delta-1} - x_{i-\Delta},
\] 
where we assume that all undefined values are~$0$.
From $t$ we can clearly compute $S$ and then $r$ in $\bigO(n)$ time.

Note that the maximum block size of $S_\nu \cup S$ considered by Algorithm~\ref{slicesolver} is $\lambda \Delta$, as the definition of $r$ and the blocks implies that every element of $S$ is placed in a private block.
Hence, we may apply Lemma~\ref{runninglem} as in the proof of Lemma~\ref{thm:head} to complete the proof.
\end{proof}

The above two lemmas prove Lemma~\ref{thm:tail}.

\section{Recovering $k$-sparse $\Delta$-separated signals}

The recovery problem for $k$-sparse $\Delta$-separated signals $x\in \mathbb R^n$ is as follows.
For a sensing matrix $A$, we are given the set of noisy measurements, a vector $y \in \mathbb R^m$,
which relates to $x$ via $y = Ax + e$ for some noise vector $e \in \mathbb R^m$.
The task consists in recovering the original signal $x$, or a vector close to~$x$. 
Hegde, Duarte and Cevher~\cite{HDC09} have shown that this is possible if $A$ is an i.i.d.~subgaussian matrix and $m \ge C \log(\frac{n}{k}-\Delta)$ for some constant $C$.

To do so, Hegde et al.\ prove that such a matrix enjoys some form of the \emph{restricted isometry property} (RIP).
In general, a matrix $A \in \mathbb R^{m \times n}$ is said to have the RIP with constant $\delta$ if every $k$-sparse vector $x \in \mathbb R^n$ satisfies
\begin{equation}\label{eqn:RIP}
(1-\delta) ||x||_2^2 \le ||A x||_2^2 \le (1+\delta) ||x||_2^2.
\end{equation}
In our context, it is enough if \eqref{eqn:RIP} is satisfied for all $k$-sparse $\Delta$-separated vectors $x$.
To formalize this, let us say that a matrix $A$ has the $(k,\Delta,p)$-RIP with constant $\delta$ if \eqref{eqn:RIP} holds for every $k$-sparse $p$-spike $\Delta$-separated vector $x \in \mathbb R^n$.

Hegde, Duarte and Cevher prove the following bound for the $(k,\Delta,1)$-RIP in Theorem~\ref{thm:RIP} and explain how it carries over to the $(k,\Delta,2)$-RIP.
Using the same reasoning, it does in fact extend to the $(k,\Delta,p)$-RIP for every fixed $p$.

\begin{theorem}[Hegde, Duarte and Cevher~\cite{HDC09}]\label{thm:RIP}
Fix $p \in \mathbb N$.
There is a constant $C$ such that, for all $\delta >0$, any $t > 0$ and any
\begin{equation}\label{eqn:p-rip}
m \ge C\left(\delta^{-2} (k \log \left(\frac{n}{k} - \Delta \right) + t -k \ln \delta) \right)
\end{equation}
an $m \times n$ i.i.d.~subgaussian random matrix has the $(k,\Delta,p)$-RIP
with constant $\delta$ with probability at least $1-\epsilon^{-t}$.
\end{theorem}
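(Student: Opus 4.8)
The plan is to follow the standard \emph{model-based} restricted isometry machinery, isolating the one place where the structure of the $p$-spike $\Delta$-separated model enters the argument: a count of the admissible supports. I would let $\mathcal M$ denote the union of the coordinate subspaces $U_I=\{x\in\R^n:\operatorname{supp}(x)\subseteq I\}$ over all $p$-spike $\Delta$-separated $I\subseteq[n]$ with $|I|\le k$, and write $L_{\mathcal M}$ for the number of such supports $I$. Since every $k$-sparse $p$-spike $\Delta$-separated vector lies in some $U_I$, and each $U_I$ has dimension at most $k$, it suffices to verify the two-sided estimate \eqref{eqn:RIP} with constant $\delta$ simultaneously over all unit vectors of all the $U_I$.

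The only probabilistic input I would use is the subgaussian concentration estimate: for a fixed unit vector $x$, an $m\times n$ i.i.d.\ subgaussian matrix $A$ satisfies $\Pr[\,|\,\|Ax\|_2^2-1\,|>\tfrac{\delta}{2}]\le 2\exp(-c_0 m\delta^2)$, where $c_0$ depends only on the subgaussian norm of the entries. Everything after this is a deterministic covering argument, so this is the sole source of randomness.

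Next I would discretize. For each admissible $I$, fix a $\rho$-net $N_I$ of the unit sphere of $U_I$ with $|N_I|\le(3/\rho)^{k}$, and set $N=\bigcup_I N_I$, so that $|N|\le L_{\mathcal M}(3/\rho)^k$. Applying the concentration bound to each point of $N$ and taking a union bound, the estimate with constant $\tfrac{\delta}{2}$ fails on $N$ with probability at most $2L_{\mathcal M}(3/\rho)^k\exp(-c_0 m\delta^2)$. A standard net-to-sphere upgrade lemma then shows that, choosing $\rho$ to be a small constant multiple of $\delta$, control with constant $\tfrac{\delta}{2}$ on each $N_I$ forces \eqref{eqn:RIP} with constant $\delta$ on all of $U_I$. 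Requiring the displayed failure probability to be at most $e^{-t}$ yields $m\ge c_0^{-1}\delta^{-2}(\ln L_{\mathcal M}+k\ln(c/\delta)+t+\ln 2)$; here the term $k\ln(c/\delta)=-k\ln\delta+O(k)$ accounts for the $-k\ln\delta$ contribution in \eqref{eqn:p-rip}.

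It remains to bound $\ln L_{\mathcal M}$, and this is the step I expect to be the \textbf{main obstacle}, since it is the only place where the separation is exploited (yielding the improvement over the naive $k\ln\tfrac nk$) and the only place where the passage from $p=1$ to general fixed $p$ occurs. For $p=1$ I would invoke the classical bijection subtracting $\Delta-1$ from each gap between consecutive chosen indices, which identifies a $\Delta$-separated set of size $k$ with an arbitrary size-$k$ subset of an interval of length $n-(k-1)(\Delta-1)$; hence $L_{\mathcal M}\le(k+1)\binom{n-(k-1)(\Delta-1)}{k}$, and $\binom Nk\le(eN/k)^k$ gives $\ln L_{\mathcal M}=O(k\ln(\tfrac nk-\Delta)+k)$. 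For general fixed $p$ I would reduce to this case: writing $I$ in increasing order $i_1<\dots<i_k$, the $p$-spike condition is exactly $i_{j+p}-i_j\ge\Delta-1$ for all $j$, so partitioning $I$ according to the residue of its rank modulo $p$ produces $p$ sets that are each $(\Delta-1)$-separated and of size at most $\lceil k/p\rceil$; since $I$ is the union of these sets, $L_{\mathcal M}$ for the $p$-spike model is at most the $p$-th power of the $p=1$ count at size $\lceil k/p\rceil$. Taking logarithms, the factor $p$, the additive $O(k)$ slack, and the entire dependence on $p$ collapse into the single constant $C$, leaving a bound of the form $O_p(k\ln(\tfrac nk-\Delta)+k)$ in the feasible regime. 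Substituting this into the measurement requirement above reproduces \eqref{eqn:p-rip}, completing the proof.
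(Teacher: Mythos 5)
The paper does not prove this statement: Theorem~\ref{thm:RIP} is imported verbatim from Hegde, Duarte and Cevher~\cite{HDC09}, and the text only asserts (without argument) that their $(k,\Delta,1)$- and $(k,\Delta,2)$-RIP bounds ``extend to the $(k,\Delta,p)$-RIP for every fixed $p$.'' Your reconstruction is the standard model-based RIP route that the cited source itself follows --- subgaussian concentration for a fixed vector, a $\rho$-net on the unit sphere of each admissible coordinate subspace, a union bound over the $L_{\mathcal M}(3/\rho)^k$ net points, and the stars-and-bars bijection giving $L_{\mathcal M}\le (k+1)\binom{n-(k-1)(\Delta-1)}{k}$, whence $\ln L_{\mathcal M}=O(k\ln(\tfrac nk-\Delta)+k)$ --- and it is sound (modulo the evident typo $1-\epsilon^{-t}$ for $1-e^{-t}$ in the statement, and the usual caveat that the additive $O(k)$ is absorbed into $Ck\log(\tfrac nk-\Delta)$ only when that logarithm is bounded away from $0$, a caveat already present in~\cite{HDC09}). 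The one genuinely new ingredient you supply is the reduction for general $p$: observing that $p$-spike $\Delta$-separation is equivalent to $i_{j+p}-i_j\ge\Delta-1$ and partitioning by rank modulo $p$ into $p$ sets that are each $(\Delta-1)$-separated of size at most $\lceil k/p\rceil$, so that the support count is at most the $p$-th power of the one-spike count; this correctly fills the gap the paper leaves open and collapses into the constant $C$ as claimed.
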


Using the \emph{Model-based CoSaMp} framework, the authors develop a recovery algorithm using $(k,\Delta,2)$-RIP matrices and linear programming to solve both \prob{Separated Sparsity} and \prob{2-Spike Separated Sparsity} exactly.
Indeed, they give an integer programming formulation of the above-mentioned problems and use total unimodularity of the restriction matrix to argue that solving the linear programming formulation are sufficient.
They obtain an algorithm that converges geometrically in the sense that, after $O(\log \frac{||x||_2}{||e||_2})$ iterations, a $k$-sparse $\Delta$-separated vector $\hat x$ is found with 
\begin{equation}\label{eqn:recovered}
||x-\hat x||_2 \le C ||e||_2 \text{ for some constant } C,
\end{equation}
where $x$ is the vector to be recovered.
The bound~\eqref{eqn:recovered} is generally considered to be a successful recovery of $x$.

If one is bound to use head and tail approximations rather than exact projection algorithms one can still recover the original signal efficiently.
We follow here the \emph{approximate model iterative hard thresholding} (AM-IHT) method, 
a framework proposed by Hegde, Indyk and Schmidt~\cite{DBLP:conf/soda/HegdeIS14}.
It resembles the iterative hard thresholding algorithm by Blumensath and Davies~\cite{BLUMENSATH2009265}.

In the statement of the algorithm below, $T$ denotes the tail approximation according to Theorem~\ref{thm:main} and $H$ denotes the head approximation for the 2-spike case (Theorem~\ref{thm:2-head}). 
We assume that both algorithms run with a fixed precision to be determined later.

\begin{algorithm}\caption{}\label{alg:am-iht}
\inp $A$, $y$, $e$ with $Ax=y+e$ and $i \in \mathbb N$\\
\out $\text{AM-IHT}(y,A,i)$
\begin{algorithmic}[1]
\State $x^0 \gets 0$
\For{$j$ from $0$ to $i-1$} 
\State $x^{j+1} \gets T(x^j + H(A^T(y-Ax^j)))$
\EndFor
\State Return $\text{AM-IHT}(y,A,i) \gets x^{i}$
\end{algorithmic}
\end{algorithm}

As Hegde et al.\ prove,
AM-IHT is a rapidly converging recovery algorithm if the approximation algorithms used run with sufficient precision.
To keep it simple, we do not state their theorem in full generality, but rather customize it for the $\Delta$-separated sparsity model.

\begin{theorem}[Hegde, Indyk and Schmidt~\cite{DBLP:conf/soda/HegdeIS14}]\label{thm:AM-IHT}
Assume that $A$ has the $(\Delta,k,4)$-RIP with constant $\delta$, that $c_T$ is the approximation guarantee of our tail approximation for the 1-spike case and that $c_H$ is the guarantee of our head approximation for the 2-spike case.
Writing $r^j=x-x^j$, we have
\begin{align*}
||r^{i+1}||_2 \le & (1+c_T) \left(  \frac{\sqrt{1-c_H^2}(1+\delta)+\delta}{c_H} + 2\delta  \right) ||r^i||_2\\
&+ (1+c_T) \sqrt{1+\delta} \left(  \frac{\sqrt{1-c_H^2}+1}{c_H} + 4 \right) ||e||_2.
\end{align*}
\end{theorem}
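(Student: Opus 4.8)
The plan is to reproduce the AM-IHT recurrence of Hegde, Indyk and Schmidt specialised to our two models: the tail approximation $T$ operates on the $1$-spike model, whereas the residual $r^j = x - x^j$ is the difference of two $1$-spike signals and is therefore $2$-spike, which is precisely why the head approximation $H$ must be the $2$-spike one (Theorem~\ref{thm:2-head}). Throughout I would write $r = r^i$ and abbreviate the proxy as $b = \trsp{A}(y - Ax^i)$. Since $y = Ax + e$ we have $y - Ax^i = Ar + e$, so $b = \trsp{A}Ar + \trsp{A}e$. Let $\Omega$ be the support of $r$ and $\Gamma$ the support of $H(b)$; both are $2$-spike $\Delta$-separated, so $\Gamma \cup \Omega$ is $4$-spike. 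This is the origin of the $(\Delta,k,4)$-RIP hypothesis, and every RIP estimate below is applied to vectors supported on $\Gamma\cup\Omega$.

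First I would dispose of the tail step. Writing $z = x^i + H(b)$, the iterate is $x^{i+1} = T(z)$. Because $x$ is itself a feasible $1$-spike approximation of $z$, the tail guarantee gives $\|z - T(z)\|_2 \le c_T\,\|z - x\|_2$, and the triangle inequality yields
\[
\|r^{i+1}\|_2 \le \|x - z\|_2 + \|z - x^{i+1}\|_2 \le (1+c_T)\,\|x - z\|_2 = (1+c_T)\,\|r - H(b)\|_2 .
\]
This isolates the factor $(1+c_T)$ and reduces everything to bounding $\|r - H(b)\|_2$, i.e.\ to the head step.

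The head step is the heart of the matter. I would split $r - H(b)$ into its parts on $\Gamma$ and on $\overline\Gamma$, namely $r - H(b) = r_{\overline\Gamma} + (r - b)_\Gamma$, and bound each separately. The part on $\Gamma$ is handled by the RIP: since $\|((\trsp{A}A - I)r)_\Gamma\|_2 \le \delta\|r\|_2$ and $\|(\trsp{A}e)_\Gamma\|_2 \le \sqrt{1+\delta}\,\|e\|_2$, one gets $\|(r-b)_\Gamma\|_2 \le \delta\|r\|_2 + \sqrt{1+\delta}\,\|e\|_2$. The part $\|r_{\overline\Gamma}\|_2$ is where the head guarantee enters. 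Comparing the energy that $\trsp{A}A$ places on $\Gamma$ with the energy it places on $\Omega = \operatorname{supp}(r)$ and invoking $\|b_\Gamma\|_2 \ge c_H\|b_\Omega\|_2$, the RIP lets me convert both sides back to the true vector and conclude a lower bound $\|r_\Gamma\|_2 \ge c_H(1-\delta)\|r\|_2$ up to noise and lower-order $\delta$ terms. Feeding this into the Pythagorean identity $\|r_{\overline\Gamma}\|_2 = \sqrt{\|r\|_2^2 - \|r_\Gamma\|_2^2}$ produces the characteristic $\sqrt{1-c_H^2}$ factor, while the $\delta$-corrections and the division by $c_H$ account for the shape $\tfrac{\sqrt{1-c_H^2}(1+\delta)+\delta}{c_H}$ of the leading coefficient.

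The main obstacle is exactly this last conversion: tracking the RIP error terms and the $\ell_2$-versus-$\ell_2^2$ normalisation of $c_H$ and $c_T$ through the Pythagorean step so that the constants collapse to precisely the two coefficients in the statement — the $+2\delta$ in the $\|r^i\|_2$ term and the $\sqrt{1+\delta}\bigl(\tfrac{\sqrt{1-c_H^2}+1}{c_H}+4\bigr)$ in the $\|e\|_2$ term. This is a bounded amount of inequality-chasing rather than a conceptual difficulty; the only genuinely structural points are that the residual forces the $2$-spike head model and that the support union $\Gamma\cup\Omega$ forces $p=4$ in the RIP. Assembling the two estimates for $\|r - H(b)\|_2$ and multiplying by $(1+c_T)$ then gives the claimed recurrence.
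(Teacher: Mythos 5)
This theorem is not proved in the paper at all: it is quoted (in a form customized to the $\Delta$-separated model) from the cited work of Hegde, Indyk and Schmidt, and the surrounding text explicitly says ``As Hegde et al.\ prove, \dots''. So there is no in-paper proof to compare your attempt against; the relevant comparison is with the argument in the cited reference.

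Measured against that, your outline reconstructs the standard AM-IHT analysis correctly at the structural level. The two points you call ``genuinely structural'' are exactly the ones this paper relies on: the residual $r^j = x - x^j$ is a difference of two $1$-spike $k$-sparse signals and hence $2$-spike $2k$-sparse, which is why $H$ must be the $2$-spike head approximation of Theorem~\ref{thm:2-head}; and the RIP is invoked on vectors supported on $\Gamma\cup\Omega$, a union of two $2$-spike supports, which is why the hypothesis is the $(\Delta,k,4)$-RIP. The tail step $\|r^{i+1}\|_2\le(1+c_T)\|r-H(b)\|_2$ and the decomposition $r-H(b)=r_{\overline\Gamma}+(r-b)_\Gamma$ with the RIP estimates on the $\Gamma$-part and the head guarantee driving the $\sqrt{1-c_H^2}$ factor on the $\overline\Gamma$-part are all as in the original proof. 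Two caveats remain. First, your proposal is an outline, not a proof: the derivation of the precise coefficients (the $+2\delta$ and the $\sqrt{1+\delta}\bigl(\tfrac{\sqrt{1-c_H^2}+1}{c_H}+4\bigr)$) is deferred to unspecified ``inequality-chasing'', and the Pythagorean step needs a little care when the noise term dominates the lower bound on $\|r_\Gamma\|_2$. Second, you flag but do not resolve the normalization issue: this paper defines head and tail guarantees on $\|\cdot\|_2^2$, whereas the constants $c_H$, $c_T$ in the displayed recurrence are the un-squared $\ell_2$ guarantees of Hegde, Indyk and Schmidt (so a $(1-\epsilon)$-head approximation in the paper's sense enters the bound as $c_H=\sqrt{1-\epsilon}$). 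Since the paper itself silently performs this translation when it later plugs in $\epsilon=0.01$, a complete write-up would have to make it explicit.
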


Theorem~\ref{thm:AM-IHT} warrants geometric convergence if $c_T$ and $c_H$ are close enough to 1 and $\delta$ is sufficiently small.
For example, if we run the head approximation for the 2-spike case and the tail approximation for the 1-spike case with $\epsilon = 0.01$ as well as Theorem~\ref{thm:RIP} with $\delta = 0.01$, Theorem~\ref{thm:AM-IHT} yields
\begin{align*}
||x-\text{AM-IHT}(y,A,i)||_2 \le 0.35^i ||x||_2 + 16 ||e||_2,
\end{align*}
where $\text{AM-IHT}(y,A,i)$ denotes the vector obtained after $i$ iterations of the AM-IHT procedure.
Consequently, after $O(\log \frac{||x||_2}{||e||_2})$ iterations we arrive at a $k$-sparse $\Delta$-separated vector $\hat x$ satisfying~\eqref{eqn:recovered}, as desired.

Let $\Lambda$ be the time needed to multiply a vector with an i.i.d.~subgaussian $m\times n$ matrix where $m$ is given by~\eqref{eqn:p-rip}.
Note that $\Lambda = O(n k \log{(\frac{n}{k}-\Delta)})$.
For a fixed precision, the tail approximation runs in $O(n)$ time according to Theorem~\ref{thm:main} and the head approximation runs in $O(n\Delta)$ time according to Theorem~\ref{thm:2-head}.
In view of Algorithm~\ref{alg:am-iht}, we obtain the following.

\begin{corollary}
The AM-IHT algorithm can be implemented to run in $O(n\Delta+\Lambda)$ time steps per iteration.
\end{corollary}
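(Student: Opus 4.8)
The plan is a direct running-time audit of a single iteration of the for-loop in Algorithm~\ref{alg:am-iht}: I would fix one index $j$ and bound the cost of computing $x^{j+1}$ from $x^j$ via the update $x^{j+1}\gets T(x^j+H(A^T(y-Ax^j)))$, which is exactly the per-iteration cost the corollary asks for. I would evaluate the nested expression from the inside out, attributing a cost to each of its six elementary pieces.

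First, forming $Ax^j\in\mathbb R^m$ is a single matrix-vector product and costs exactly $\Lambda$ by the definition of $\Lambda$. Subtracting to obtain the residual $y-Ax^j\in\mathbb R^m$ costs $O(m)$. Applying $A^T$ to the residual is the product of the $n\times m$ matrix $A^T$ with an $m$-vector; since a dense $m\times n$ matrix and its transpose both require $\Theta(mn)$ scalar multiply-adds, this step also costs $O(\Lambda)$. Feeding the resulting $n$-vector into the $2$-spike head approximation $H$ costs $O(\epsilon^{-2}\Delta n)$ by Theorem~\ref{thm:2-head}, which is $O(n\Delta)$ for a fixed precision $\epsilon$. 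Adding $x^j$ is $O(n)$, and the final $1$-spike tail approximation $T$ costs $O(\epsilon^{-2}n)=O(n)$ by Theorem~\ref{thm:main}. Summing, one iteration runs in $\Lambda+O(m)+O(\Lambda)+O(n\Delta)+O(n)=O(\Lambda+n\Delta+m)$ time.

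The only point requiring a word of justification---and the closest thing to an obstacle in an otherwise routine count---is that the $O(m)$ term is absorbed into $O(\Lambda)$. This follows because $m=O\!\left(k\log(\tfrac{n}{k}-\Delta)\right)$ for a fixed choice of $\delta$, $t$ and $p$ by~\eqref{eqn:p-rip}, while $\Lambda=O\!\left(nk\log(\tfrac{n}{k}-\Delta)\right)$, so that $m=O(\Lambda)$. With this, the per-iteration running time simplifies to $O(n\Delta+\Lambda)$, as claimed.
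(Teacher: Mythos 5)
Your proposal is correct and follows essentially the same reasoning as the paper, which simply notes that each iteration consists of a constant number of matrix--vector products (each costing $O(\Lambda)$), one call to the $2$-spike head approximation ($O(n\Delta)$ for fixed precision by Theorem~\ref{thm:2-head}), one call to the tail approximation ($O(n)$ by Theorem~\ref{thm:main}), and $O(n+m)$ vector arithmetic. Your extra observation that $m=O(\Lambda)$ is a harmless bit of added care that the paper leaves implicit.
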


This beats the algorithm by Hegde, Duarte and Cevher~\cite{HDC09} which needs $O(n^{3.5}L^2+\Lambda)$ time per iteration where $L$ is the number of bits of the input.
The bottleneck in the running time of their algorithm is to solve the linear program formulations of \prob{Separated Sparsity} and \prob{2-Spike Separated Sparsity} exactly using an interior point method.
If, instead, we plug in the dynamic programs according to Lemma~\ref{dplem} and~\ref{2dplem}, 
the algorithm needs $O(nk\Delta+\Lambda)$ time per iteration.
We remark that their algorithm was not necessarily designed to obtain a fast asymptotic running time and that it might perform much better compared to ours on smaller instances.

\section{Experiments}\label{superscientificsec}

We have implemented the main algorithms of this paper and investigated running times 
and solution qualities for different random instances. The code,
which is available online\footnote{available at {\small\texttt{http://www.uni-ulm.de/bruhn}}},
 is written in Python, 
and runs on a 
standard desktop computer. We have deviated in one
point from the algorithms as described here: to pick the $k$ largest elements
in Algorithm~\ref{slicesolver} we have used a heap-based 
library method of Python rather than \alg{Introselect}.

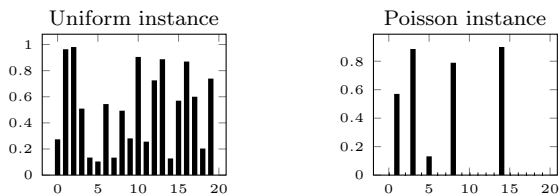
\begin{figure}[!htb]
\centering
\begin{tikzpicture}
\begin{axis}[
title={Uniform instance},
 no markers,
 ycomb,
tiny,
ymin=0,
]
\addplot[ line width = 2pt] table {uniforminst.dat};
\end{axis}
\end{tikzpicture}
\hspace{1cm}
\begin{tikzpicture}
\begin{axis}[
title={Poisson instance},
 no markers,
 ycomb,
tiny,
ymin=0,
]
\addplot[ line width = 2pt] table {purepoisson.dat};
\end{axis}
\end{tikzpicture}

\caption{Examples of a uniform and a Poisson process instance with expected arrival time~$4$}\label{instfig}
\end{figure}

We have tested the algorithms  on two types of random instances. In a \emph{uniform instance}, 
we have picked $x\in\mathbb R^n$ in such a way that each entry $x_i$ receives uniformly at random
a value in the interval $[0,1)$. The second type are \emph{Poisson instances}, where a Poisson
process is run to identify a number of positions $I\subseteq[n]$, the \emph{spikes} of the signal, 
where $x_i$, $i\in I$, receives a value chosen uniformly at random from the interval $[0,1)$;
all entries $x_j$ outside $I$ have value $0$. The expected arrival time of the process, i.e., the 
expected difference between consecutive spikes, is set to the same $\Delta$ as in the instance
$\spa_x([n],k)$, or to $\tfrac{1}{2}\Delta$ when we treat \prob{$2$-Spikes Separated Sparsity}. 
Two example instances are shown in Figure~\ref{instfig}.
Each data point in the Figures~\ref{runningfig}--~\ref{2qualifig} 
is the mean over~$100$ runs with the same parameters.

\def\dyp{*}
\def\hsa{asterisk}
\def\hsb{triangle*}
\def\hsc{square*}
\def\tsa{10-pointed star}
\def\tsb{diamond*}
\def\tsc{pentagon*}

\comment{
\def\dypcol{blue}
\def\hsacol{orange}
\def\hsbcol{red}
\def\hsccol{magenta}
\def\tsacol{cyan}
\def\tsbcol{brown}
\def\tsccol{purple}
}

\def\dypcol{darkgray}
\def\hsacol{darkgray}
\def\hsbcol{darkgray}
\def\hsccol{darkgray}
\def\tsacol{darkgray}
\def\tsbcol{darkgray}
\def\tsccol{darkgray}

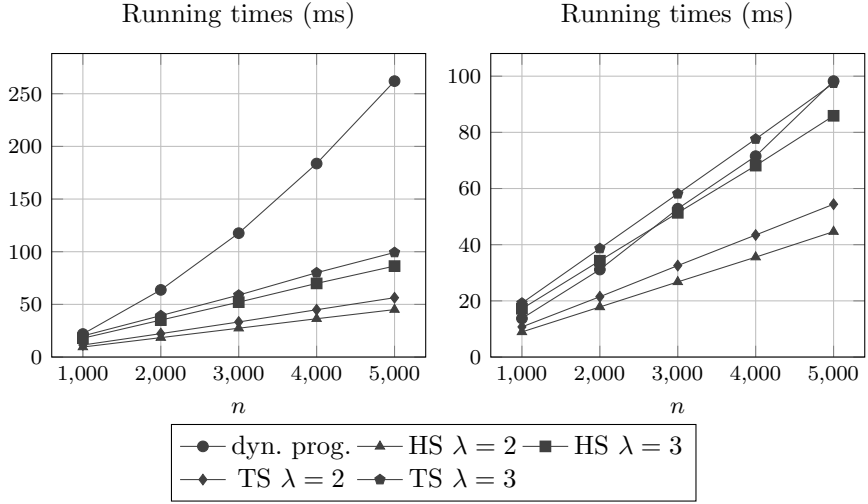
\begin{figure}[!htb]
\centering
\begin{tikzpicture}
\begin{axis}[
title={Running times (ms)},
xlabel={$n$},
grid=major,
cycle list={
  {\dypcol,mark=\dyp},
  {\hsbcol,mark=\hsb},
  {\hsccol,mark=\hsc},
  {\tsbcol,mark=\tsb},
  {\tsccol,mark=\tsc}
},
legend entries={dyn.\ prog.,HS $\lambda=2$,HS $\lambda=3$, TS $\lambda=2$,TS $\lambda=3$},
legend columns=3,
legend to name=named,
small,
ymin=0,
]
\addplot table {timeresA.dat};
\addplot table {timeresB.dat};
\addplot table {timeresC.dat};
\addplot table {timeresD.dat};
\addplot table {timeresE.dat};
\end{axis}
\end{tikzpicture}
\begin{tikzpicture}
\begin{axis}[
title={Running times (ms)},
xlabel={$n$},
cycle list={
  {\dypcol,mark=\dyp},
  {\hsbcol,mark=\hsb},
  {\hsccol,mark=\hsc},
  {\tsbcol,mark=\tsb},
  {\tsccol,mark=\tsc}
},
grid=major,
small,
ymin=0,
]
\addplot table {logtimeA.dat};
\addplot table {logtimeB.dat};
\addplot table {logtimeC.dat};
\addplot table {logtimeD.dat};
\addplot table {logtimeE.dat};
\end{axis}
\end{tikzpicture}
\\
\ref{named}
\caption{Comparison of running times of dynamic programming and Algorithm~\ref{headalgo} (HS) and 
Algorithm~\ref{tailalgo} (TS),
means of 100 repeats per data point. 
Left: $\Delta=\lfloor\tfrac{1}{2}\sqrt n\rfloor=k$. Right: $\Delta=40$, $k=\lfloor\log_2(n)\rfloor$.}\label{runningfig}
\end{figure}

In Figure~\ref{runningfig}, we have plotted the running times of the dynamic programming algorithm,
as well as Algorithms~\ref{headalgo} and Algorithm~\ref{tailalgo} with $\lambda=2,3$ each. 
As the actual instance should not have a significant impact on the running times, 
we have only tested uniformly generated instances. The left diagram shows 
the running times for varying $n$ and $\Delta=k=\lfloor\tfrac{1}{2}\sqrt n\rfloor$. Clearly, 
Algorithms~\ref{headalgo} and Algorithm~\ref{tailalgo} perform much better than the dynamic programming algorithm. 
This is not so pronounced in the right diagram, where we have fixed $\Delta$ to~$40$, while $k$
is set to $\lfloor \log_2 n\rfloor$. As the asymptotic running time of dynamic programming is $\bigO(kn)$,
it is not at all surprising that dynamic programming fares better when $k$ is small.

\begin{figure}[!htb]
\centering
\begin{tikzpicture}
\begin{axis}[
title={Head approximation in per cent},
xlabel={$k$},
cycle list={
  {\hsacol,mark=\hsa},
  {\hsbcol,mark=\hsb},
  {\hsccol,mark=\hsc},
},
grid=major,
legend entries={HS $\lambda =1$, HS $\lambda=2$,HS $\lambda=3$},
legend columns=3,
legend to name=framed,
small,
]
\addplot table {uniheadA.dat};
\addplot table {uniheadB.dat};
\addplot table {uniheadC.dat};
\end{axis}
\end{tikzpicture}
\begin{tikzpicture}
\begin{axis}[
title={Tail approximation  in per cent},
xlabel={$k$},
cycle list={
  {\hsacol,mark=\hsa},
  {\hsbcol,mark=\hsb},
  {\hsccol,mark=\hsc},
},
grid=major,
small,
]
\addplot table {unitailA.dat};
\addplot table {unitailB.dat};
\addplot table {unitailC.dat};
\end{axis}
\end{tikzpicture}
\\
\ref{framed}
\caption{Head and tail approximation in per cent, means of 100 repeats per data point.
Uniform instance with $\Delta=20$ and $n=1000$; only Algorithm~\ref{headalgo} (HS)
shown as the results for Algorithm~\ref{tailalgo} are nearly identical}\label{uniqualifig}
\end{figure}
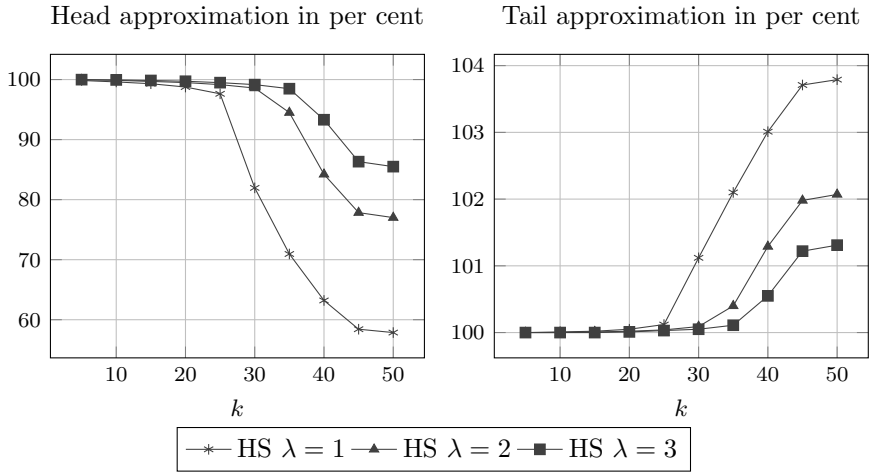

Figure~\ref{uniqualifig} shows  head and tail approximation guarantees of Algorithm~\ref{headalgo}
for different $\lambda$, when applied to uniform instances of size $n=1000$. 
As arguably, the performance 
does not depend that much on the actual value of $\Delta$ compared to $n$ but rather 
on $k\Delta$ compared to $n$, we have kept $\Delta$ fixed to~$20$ and only varied $k$.
In a uniform instance the expected number of strong indices is vanishingly small. Clearly, 
if there are no strong indices then Algorithm~\ref{tailalgo} defaults to Algorithm~\ref{headalgo},
and indeed our test showed no difference between the two in this setting. Consequently, we
have  omitted the results of Algorithm~\ref{tailalgo} in Figure~\ref{uniqualifig}. 
As can be seen, the algorithm  exhibits very good tail approximation 
even for $\lambda=1$ and reasonably good head approximation at least for $\lambda=2,3$. 

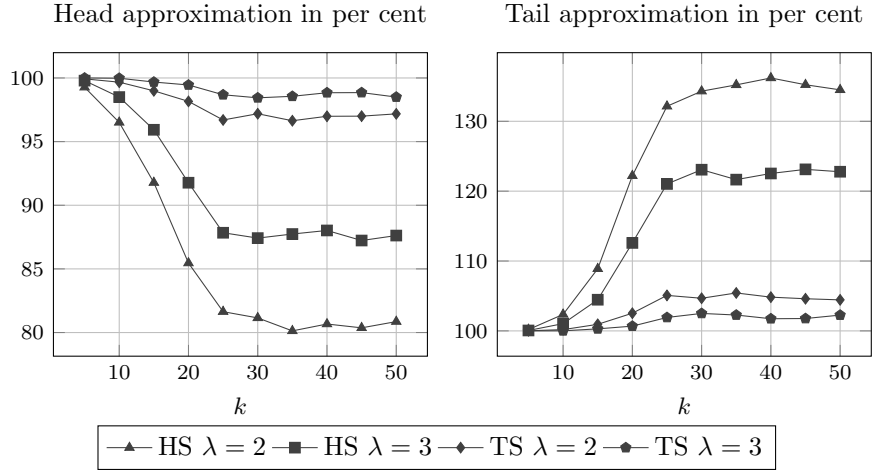
\begin{figure}[!htb]
\centering
\begin{tikzpicture}
\begin{axis}[
title={Head approximation in per cent},
xlabel={$k$},
cycle list={
  {\hsbcol,mark=\hsb},
  {\hsccol,mark=\hsc},
  {\tsbcol,mark=\tsb},
  {\tsccol,mark=\tsc}
},
grid=major,
legend entries={HS $\lambda=2$,HS $\lambda=3$, TS $\lambda=2$,TS $\lambda=3$},
legend columns=4,
legend to name=blamed,
small,
]
\addplot table {pheadA.dat};
\addplot table {pheadB.dat};
\addplot table {pheadC.dat};
\addplot table {pheadD.dat};
\end{axis}
\end{tikzpicture}
\begin{tikzpicture}
\begin{axis}[
title={Tail approximation  in per cent},
xlabel={$k$},
cycle list={
  {\hsbcol,mark=\hsb},
  {\hsccol,mark=\hsc},
  {\tsbcol,mark=\tsb},
  {\tsccol,mark=\tsc}
},
grid=major,
small,
]
\addplot table {ptailA.dat};
\addplot table {ptailB.dat};
\addplot table {ptailC.dat};
\addplot table {ptailD.dat};
\end{axis}
\end{tikzpicture}
\\
\ref{blamed}
\caption{Head and tail approximation in per cent, means of 100 repeats per data point.
Poisson process instance with expected arrival time $\Delta=20$ and $n=1000$.}\label{poissonfig}
\end{figure}

Algorithms~\ref{headalgo} and~\ref{tailalgo} only differ when there are a (substantial) number
of strong indices. Obviously, Poisson instances are set up precisely in this way as to 
yield a good number of strong indices. Not surprisingly, Algorithm~\ref{tailalgo}
handles these instances better than Algorithm~\ref{headalgo}, as can be seen in Figure~\ref{poissonfig}. 
Algorithm~\ref{tailalgo} shows very good head and tail approximation, even for $\lambda=2$, while 
both are much poorer for Algorithm~\ref{headalgo}. 

\begin{figure}[!htb]
\centering
\begin{tikzpicture}
\begin{axis}[
title={Running times (ms)},
xlabel={$n$},
grid=major,
cycle list={
  {\dypcol,mark=\dyp},
  {\hsbcol,mark=\hsb},
  {\hsccol,mark=\hsc}
},
legend entries={dyn.\ prog.\ two spikes,HS $\lambda=2$ two spikes,HS $\lambda=3$ two spikes},
legend columns=3,
legend to name=two,
small,
ymin=0,
]
\addplot table {twob.dat};
\addplot table {twoc.dat};
\addplot table {twod.dat};
\end{axis}
\end{tikzpicture}
\begin{tikzpicture}
\begin{axis}[
title={Running times (ms)},
xlabel={$n$},
grid=major,
cycle list={
  {\dypcol,mark=\dyp},
  {\hsbcol,mark=\hsb},
  {\hsccol,mark=\hsc}
},
small,
ymin=0,
]
\addplot table {logtwob.dat};
\addplot table {logtwoc.dat};
\addplot table {logtwod.dat};
\end{axis}
\end{tikzpicture}
\\
\ref{two}
\caption{Comparison of running times of dynamic programming and Algorithm~\ref{headalgo} (HS) for two spikes,
means of 100 repeats per data point. 
Left: $\Delta=\lfloor\tfrac{1}{2}\sqrt n\rfloor=k$. Right: $\Delta=40$, $k=\lfloor\log_2(n)\rfloor$.}\label{twofig}
\end{figure}
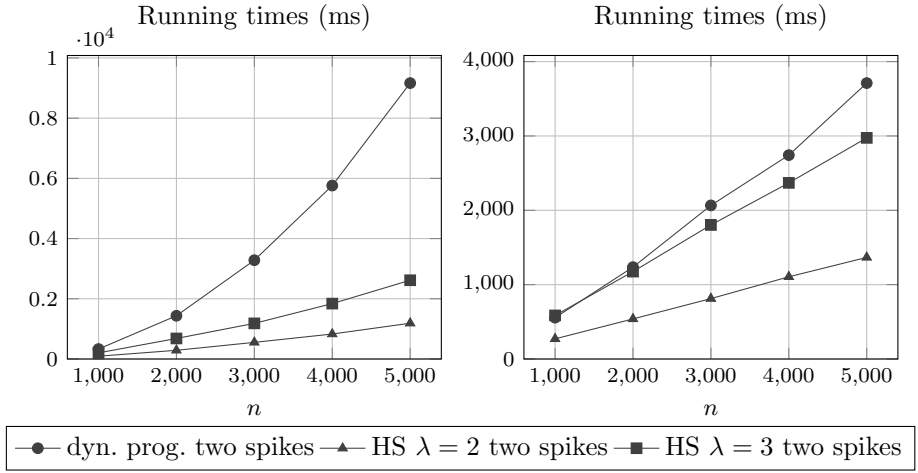

The final two figures contain the test results for \prob{$2$-Spikes Separated Sparsity}. In Figure~\ref{twofig},
the running time of the dynamic programming algorithm is compared to the running time of 
the algorithm of Theorem~\ref{thm:2-head}
for two different precisions. On the left, as in Figure~\ref{runningfig}, we have 
set $k=\Delta=\lfloor\tfrac{1}{2}\sqrt n\rfloor$ and on the right we have set $k=\lfloor\log_2 n\rfloor$ and $\Delta=40$.
The running times behave in a similar way as in Figure~\ref{runningfig}: the approximation algorithm is 
substantially faster, especially when $k=\lfloor\tfrac{1}{2}\sqrt n\rfloor$. Since we only prove a constant
head approximation guarantee in Theorem~\ref{thm:2-head}, we show only the head approximation in Figure~\ref{uniqualifig},
again for uniform as well as Poisson instances. (There is a second reason: in the Poisson instance, even 
for expected arrivial times of $\tfrac{1}{2}\Delta$, occasionally the instance is already $2$-spikes $\Delta$-separated, 
which means that the optimal tail value is~$0$; then, however, we would need to divide by~$0$ in order to compute 
the tail approximation factor.) Again, we see very good head approximation in the case of uniform instances,
and reasonable head approximation in the case of Poisson instances, at least for $\lambda=3$.

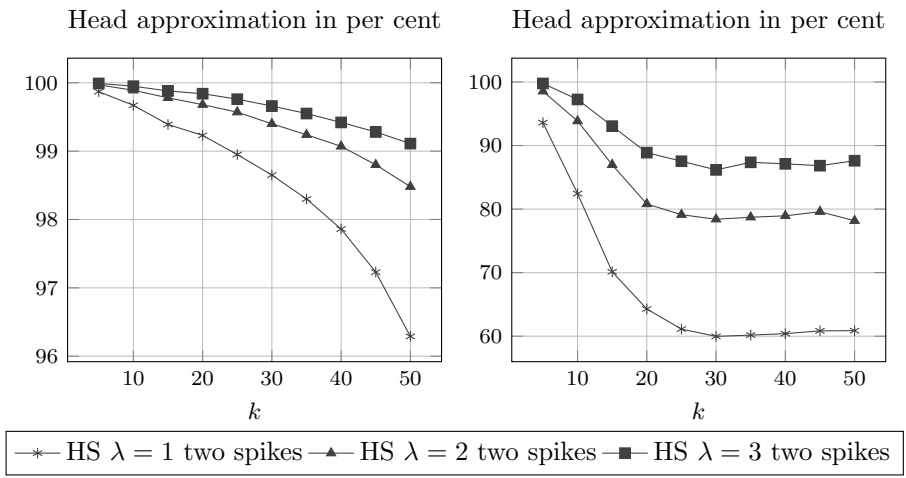
\begin{figure}[!htb]
\centering
\begin{tikzpicture}
\begin{axis}[
title={Head approximation in per cent},
xlabel={$k$},
grid=major,
legend entries={HS $\lambda=1$ two spikes, HS $\lambda=2$ two spikes,HS $\lambda=3$ two spikes},
legend columns=3,
cycle list={
  {\hsacol,mark=\hsa},
  {\hsbcol,mark=\hsb},
  {\hsccol,mark=\hsc}
},
legend to name=twounihead,
small,
]
\addplot table {2unia.dat};
\addplot table {2unib.dat};
\addplot table {2unic.dat};
\end{axis}
\end{tikzpicture}
\begin{tikzpicture}
\begin{axis}[
title={Head approximation in per cent},
xlabel={$k$},
cycle list={
  {\hsacol,mark=\hsa},
  {\hsbcol,mark=\hsb},
  {\hsccol,mark=\hsc}
},
grid=major,
small,
]
\addplot table {2poiheada.dat};
\addplot table {2poiheadb.dat};
\addplot table {2poiheadc.dat};
\end{axis}
\end{tikzpicture}
\\
\ref{twounihead}
\caption{Head approximation for two spikes, in per cent, means of 100 repeats per data point.
Left: Uniform instance with $\Delta=20$ and $n=1000$. 
Right: Poisson process instance with expected arrival time $\Delta=20$ and $n=1000$.}\label{2qualifig}
\end{figure}

\section{Future work}


We presented linear time head and tail approximations of arbitrary precision for the $\Delta$-separated model projection problem.
In principle, one might hope for a linear time algorithm to solve the problem to optimality.
We suspect, however, that this is not possible. 
Perhaps a conditional lower bound on the running time could be obtained?

Another question  for future work is whether our results can be extended to 
yield fast algorithms for more general model projection problems.
Here one might start by investigating problems that have a totally unimodular restriction matrix 
 and models that correspond to very limited graph classes like interval graphs.

\bibliographystyle{amsplain}
\bibliography{sparse}

\end{document}